\newcommand{\tabincell}[2]{\begin{tabular}{@{}#1@{}}#2\end{tabular}} 
\begin{document}
%

\title{An Algorithm for Reducing Approximate Nearest Neighbor to Approximate Near Neighbor with $O(\log{n})$ Query Time}

\titlerunning{An $O(\log{n})$ Query Time Algorithm for Reducing $\epsilon$-NN to $(c,r)$-NN}
%
\author{Hengzhao Ma\inst{\dag} \and Jianzhong Li\inst{\ddag}
}

\authorrunning{H. Ma, J. Li}
%
\institute{
	Harbin Institute of Technology, Harbin, Heilongjiang 150001, China\\
	\dag\; \email{hzma@stu.hit.edu.cn}\\
	\ddag\; \email{lijzh@hit.edu.cn}
}

\maketitle              
\begin{abstract}
This paper proposes a new algorithm for reducing Approximate Nearest Neighbor problem to Approximate Near Neighbor problem. The advantage of this algorithm is that it achieves  $O(\log{n})$ query time. As a reduction problem, the query time complexity is the times of invoking the algorithm for Approximate Near Neighbor problem. All former algorithms for the same reduction need $polylog(n)$ query time. A box split method proposed by Vaidya is used in our paper to achieve the $O(\log{n})$ query time complexity.

\keywords{Computation Geometry \and Approximate Nearest Neighbor \and Reduction }
\end{abstract}
\section{Introduction}\label{sec:into}

The approximate nearest neighbor problem, $\epsilon$-NN for short, 
can be defined as follows: 
given a set $P$ of points in a metric space $S$ equipped with a distance function $D$, and a query point $q \in S$, find a point $p\in P$ such that $D(p, q)\le(1+\epsilon) D(p^*, q)$, where $p^*$ has the minimal distance to $q$ in $P$.
$\epsilon$-NN is one of the most important proximity problems 
in computation geometry. 
Many proximity problems in computation geometry can be reduced to $\epsilon$-NN \cite{Goel2001}, such as approximate diameter, approximate furthest neighbor, and so on. 
$\epsilon$-NN is also important in many other areas, such as databases, data mining, information retrieval and machine learning.

Due to its importance, $\epsilon$-NN has been the subject of substantial research efforts. Many algorithms for solving $\epsilon$-NN have been discovered. These works can be summarized into four classes.

The first class of the algorithms tries to build data structures that support solving $\epsilon$-NN efficiently.
Arya et. \cite{Arya1993} give a such algorithm with query time $1/\epsilon^{O(d)}\cdot\log{n}$,
space $1/\epsilon^{O(d)}\cdot n$ and preprocessing time $1/\epsilon^{O(d)}\cdot n\log{n}$.
Another work \cite{Arya1998} gives an algorithm requiring $O(dn)$ space and $O(dn\log{n})$ preprocessing time but query time as high as $(d/\epsilon)^{O(d)}\cdot\log{n}$.
Kleinberg proposes two algorithms in \cite{Kleinberg1997}. 
The first algorithm is deterministic and achieves query time of $O(d\log^2{d}(d+\log{n}))$, using a data structure
that requires $O((n\log{d})^{2d})$ space and $O((n\log{d})^{2d})$ preprocessing time.
The second algorithm is a randomized version of the first one. By a preprocessing procedure that takes $O(d^2\log^2{d}\cdot n\log^2{n})$ time, it reduces the storage requirement to $O(dn\cdot \log^3{n})$, but raises the query time up to $O(n+d\log^3{n})$.

The second class of the algorithms considers the situation of $\epsilon=d^{O(1)}$.  
One such algorithm is given in \cite{Bern1993}. 
It can answer $O(\sqrt{d})$-NN in $O(2^d\log{n})$ time with $O(d8^dn\log{n})$ preprocessing time and $O(d2^dn)$ space.
Chan \cite{Chan1997} improves this result by giving an algorithm that can answer $O(d^{3/2})$-NN in $O(d^2\log{n})$ query time with $O(d^2n\log{n})$ preprocessing time and $O(dn\log{n})$ space.

The third interesting class of work tries to solve $\epsilon$-NN by inspecting some intrinsic dimension of the input point set $P$. 
An exemplar work is in \cite{Krauthgamer2004}. 
The paper gives an algorithm whose query time is bounded by $2^{O(dim(P))}\log{\Delta}+ (1/\epsilon)^{O(dim(P))}$, where $dim(P)$ is the intrinsic dimension of the input point set $P$, and $\Delta$ is the diameter of $P$.

Besides these algorithms mentioned above, Indyk et. \cite{Indyk1998} initiate the work on the fourth class of algorithms. The key idea is to define an Approximate \textit{Near} Neighbor problem, denoted as $(c,r)$-NN, and reduce $\epsilon$-NN to it. The $(c,r)$-NN problem can be viewed as a decisive version of $\epsilon$-NN. The formal definition of $(c,r)$-NN is give in Definition \ref{def:cr-nn} in the next section. 

To use this method to solve $\epsilon$-NN, two parts of problem must be considered. One is how to solve $(c,r)$-NN, and the other is how to reduce $\epsilon$-NN to $(c,r)$-NN. Some works about the two parts of problem are discussed below. Our study focuses on the latter part.

\subsubsection{Algorithms to solve $(c,r)$-NN}
The existing algorithms for $(c,r)$-NN mainly consider the specific situation of $d$-dimensional Euclidean space with $1$-order and $2$-order Minkowski distance metrics. Each input point $x$ is given in the form of $(x_1,\cdots,x_d)$.
And $q$-order Minkowski $L_q$ distance between points $x$ and $y$ is given by $D(x,y)=\left(\sum\limits_{i=1}^d{|x_i-y_i|^q}\right)^{\frac{1}{q}}$. 
The $1$-order and $2$-order Minkowski distance are well-known Manhattan distance and Euclidean distance, respectively.
Another simpler situation, which is the $(c,r)$-NN problem under Hamming cube $\{0,1\}^d$ equipped with Hamming distance, is usually considered in theoretical studies.

Table \ref{tbl:cmp-crnn} summarizes the complexities of the existing algorithms for $(c,r)$-NN under Euclidean space and $L_1$ distance. 
These papers also give solutions under $L_2$ distance, but we omit these results due to space limitation. 
Usually the complexities under $L_2$ distance is higher than that under $L_1$ distance. It is a key characteristic of the existing algorithms for $(c,r)$-NN that they usually have different complexities for problems under different order of Minkowski distance metrics.

\begin{table}[htb]
	\centering
	\caption{Solutions to $(c,r)$-NN under Euclidean space and $L_1$ distance.}
	\label{tbl:cmp-crnn}
	\resizebox{\textwidth}{!}{
		\begin{tabular}{|c|c|c|c|c|c|c|}
			\hline
			\multirow{2}{*}{Source}&\multicolumn{2}{|c|}{Data structure building} &\multicolumn{2}{|c|}{Query} &\multirow{2}{*}{Space}&
			\multirow{2}{*}{\tabincell{c}{Update\\time}}\\
			\cline{2-5}
			&Time&\tabincell{c}{Failure\\probability}
			&Time&\tabincell{c}{Failure\\probability}&&\\
			\hline
			\tabincell{c}{\cite{Indyk1998}\\$(\epsilon=c-1)$}&
			$O(n\cdot\frac{1}{\epsilon^d})$& 0 &
			$O(1)$&0&
			$O(n\cdot\frac{1}{\epsilon^d})$&
			$O(\frac{1}{\epsilon^d})$\\
			\hline
			\hline
			\tabincell{c}{\cite{Kushilevitz2000}\\$(\epsilon=c-1)$}&
			$O\left(n\frac{d^3}{\epsilon^2}(n\log{d})^{O(\frac{1}{\epsilon^2})}\right)$&$O(1)$&
			$O\left(\frac{d}{\epsilon^2}polylog(dn)\cdot\log{\frac{1}{f}}\right)$ &$f$&
			$O\left(\frac{d^3}{\epsilon^2}(n\log{d})^{O(\frac{1}{\epsilon^2})}\right)$&
			$O(n^{O(\frac{1}{\epsilon^2})})$\\
			\hline
			\hline
			\cite{Panigrahy2006}&
			$O(n^{(\frac{c}{c-1})^2}\log{n})$& 0 &
			$O(dn^{o(1)})$&$O(1)$&
			$O(n^{(\frac{c}{c-1})^2})$&
			$O(n^{(\frac{c}{c-1})^2})$\\
			\hline
			\cite{Datar2004,Andoni2014,Andoni2015a}&
			$O(dn^{1+\frac{1}{2c-1}}\log{n})$& 0 &
			$O(dn^{\frac{1}{2c-1}})$&$O(1)$&$O(dn+n^{1+\frac{1}{2c-1}})$&
			$O(dn^{\frac{1}{2c-1}+o(1)})$\\
			\hline
			\cite{Andoni2006}&
			$O(dn^{1+o(1)}\log{n})$& 0 &
			$O(n^{\frac{2c-1}{c^2}})$&$O(1)$&
			$O(dn^{1+o(1)})$&
			$O(dn^{o(1)})$\\
			\hline
		\end{tabular}
	}
\end{table}

The listed solutions in Table \ref{tbl:cmp-crnn} can be divided into three groups. The first group includes the one given in \cite{Indyk1998}, which is deterministic, and the other groups are randomized. The advantage of randomization is that the exponential complexity about $d$ is freed. 
The second group includes the one given in \cite{Kushilevitz2000}, which is based on a random projection method proposed in \cite{Kleinberg1997}. One distinguished characteristic of the method 
is that the data structure building stage is also randomized. 
The last group includes a long line of research work based on Locality Sensitive Hashing (LSH), which is first proposed in \cite{Indyk1998}. These works are summarized into three terms in Table \ref{tbl:cmp-crnn}, which can be viewed as the space-time trade-off under LSH framework. 

Finally, comparing the five results in Table \ref{tbl:cmp-crnn}, it can be seen that the query time grows and the space requirement drops from the first one to the last. The five results form a general space-time trade-off about the solution to $(c,r)$-NN.

\subsubsection{Reducing $\epsilon$-NN to $(c,r)$-NN}
So far there are three different algorithms for such a reduction.
Two of the three algorithms are deterministic \cite{Indyk1998,Har-Peled2001}, and the other one is randomized \cite{Har-Peled2012}. 
The complexities of the three reduction algorithms are summarized in Table \ref{tbl:cmp-reduction}. Note that query time in Table \ref{tbl:cmp-reduction} is the number of invocations of $(c,r)$-NN algorithm. And the preprocessing time about \cite{Indyk1998} is not given because there is no such analysis in that paper.

\begin{table}[htb]
	\centering
	\caption{Comparison of three reductions.}
	\label{tbl:cmp-reduction}
	\resizebox{\textwidth}{!}{
		\begin{tabular}{|c|c|c|c|c|c|c|c|}
			\hline
			\multirow{2}{*}{Source}&\multirow{2}{*}{\tabincell{c}{Approximation\\factor}} &\multicolumn{2}{|c|}{Preprocessing} &\multicolumn{2}{|c|}{Query}
			&\multirow{2}{*}{Space}\\
			\cline{3-6}
			&&Time &\tabincell{c}{Failure\\probability}
			&\tabincell{c}{Time \\(\# of $(c,r)$-NN invoked)}
			&\tabincell{c}{Failure\\probability}& \\
			\hline
			\cite{Har-Peled2012} &\tabincell{c}{$c(1+\gamma)^2$\\$(\gamma\in(\frac{1}{n},\frac{1}{2}))$\\$(c=1+\epsilon)$}&
			$O\left(\frac{T(n,c,f)}{\gamma\log^2{n}}+n\log{n}[Q(n,c,f)+D(n,c,f)]\right)$&$f\log{n}$&
			$O(\log^{O(1)}{n})$& $f\log{n}$&
			$O(\frac{S(n,c,f)}{\gamma\log^2{n}})$\\
			\hline
			\hline
			\cite{Indyk1998} & $1+\epsilon$&
			- &-&
			$O(\log^2{n})$ & 0 &
			$O(n\cdot polylog(n))$\\
			\hline
			\cite{Har-Peled2001} &$1+\epsilon$&
			$O(d\cdot n\frac{\log{n}}{\epsilon}\log{\frac{n}{\epsilon}})$ & 0 &
			$O(\log{\frac{n}{\epsilon}})$ & 0 &
			$O(d\cdot n\frac{\log{n}}{\epsilon}\log{\frac{n}{\epsilon}})$ \\
			\hline
		\end{tabular}
	}
	
\end{table}

Among the three reduction algorithms, the one proposed in \cite{Har-Peled2012} need to be explained in detail.
First, the algorithm outputs a point $p'$ such that $D(q,p')\le c(1+\gamma)^2D(q,p^*)$, where $c=1+\epsilon$ and $p^*$ is the exact NN of $q$. 
Second, the $T(n,c,f)$, $Q(n,c,f)$, $D(n,c,f)$ and $S(n,c,f)$ functions represent the complexity functions of the data structure building time, query time, update time and storage usage for $(c,r)$-NN, respectively.
Third, the parameter $f$ is the failure probability of one $(c,r)$-NN invocation, and is selected so that $f\log{n}$ is a constant less that 1. 

The fourth and the most important point about \cite{Har-Peled2012} is the $O(\log^{O(1)}{n})$ query time. The algorithm given in \cite{Har-Peled2012} explicitly invokes $O(\log{n})$ times of $(c,r)$-NN, and each invocation needs $T(n,c,f)$ time. 
As explained above, the parameter $f$, which is the failure probability of one  $(c,r)$-NN invocation, is set to $O(\frac{1}{\log{n}})$. 
Note that the algorithms for $(c,r)$-NN given in Table \ref{tbl:cmp-crnn} all have constant failure probability\footnote{The deterministic one has exponential dependence on $d$, so it it rarely used in theory and practice.}. 
In order to satisfy the requirement of $O(\frac{1}{\log{n}})$ failure probability of one $(c,r)$-NN invocation, each time the algorithm in \cite{Har-Peled2012} invokes $(c,r)$-NN, the algorithms for $(c,r)$-NN with constant failure probability must be executed multiple times, which is $O(\log^{O(1)}{n})$ times in expectation. 
Multiplying $O(\log{n})$ invocations of $(c,r)$-NN and $O(\log^{O(1)}{n})$ executions of $(c,r)$-NN algorithm for each invocation, we obtain that the algorithm in \cite{Har-Peled2012} actually invokes $O(\log^{O(1)}{n})$ times of $(c,r)$-NN algorithm. This observation is confirmed in \cite{Andoni2017a}.

\subsubsection{Our method}
We propose a new algorithm in this paper for reducing $\epsilon$-NN to $(c,r)$-NN.
Comparing with the former works \cite{Har-Peled2012,Indyk1998,Har-Peled2001}, our algorithm has the following characteristics:

(1) It achieves $O(\log{n})$ query time, counted in the number of invocations of $(c,r)$-NN algorithm. It is superior to all the other three works. This is the most distinguished contribution of this paper.

(2) Its preprocessing time is $O((\frac{d}{\epsilon})^d\cdot n\log{n})$, and the space complexity is $O((\frac{d}{\epsilon})^d\cdot n)$. Our method has better complexity than the other three works in terms of $n$, so that it is much suitable to big data with low or fixed dimensionality. This situation is plausible in many applications like road-networks and so on.

(3) In terms of the parameterized complexity treating $d$ as a constant, our result is the closest to the well recognized \emph{optimal} complexity claimed in \cite{Arya1998}, which requires $O(n\log{n})$ preprocessing time, $O(n)$ space and $O(\log{n})$ query time. 

Note that there is an $O((d/\epsilon)^d)$ factor in our preprocessing and space complexity. This factor originates from a lemma we used in \cite{Vaidya1986}.
We point out that the upper bound $O((d/\epsilon)^d)$ is actually very loose. 
There really is possibility to reduce the upper bound, and thus make our result more close to optimal. 
In this sense, our work is more promising than all the other three works. However, reducing the upper bound $O((d/\epsilon)^d)$ is out of this paper's scope, and is left as our future work.

\section{Problem Definitions and Mathematical Preparations}\label{sec:math-prep}

\subsection{Problem definitions}
We focus on $\epsilon$-NN in euclidean space $R^d$. The input is a set $P$ of $n$ points extracted from $R^d$ and a distance metric $L_q$. Each point $x$ is given as the form $(x_1,\cdots,x_d)$. $L_q$ distance metric between points $x$ and $y$ is given by $D(x,y)=\left(\sum\limits_{i=1}^d{|x_i-y_i|^q}\right)^{\frac{1}{q}}$. 

Denote $B(p,r)$ to be the $d$-dimensional ball centered at $p$ and with radius $r$. And let $p'\in B(p,r)$ be equivalent to $D(p',p)\le r$. We first give the definitions of $\epsilon$-NN and $(c,r)$-NN problems.

\begin{definition}[$\epsilon$-NN]\label{def:eps-nn}
	Given a set $P$ of points extracted from $R^d$, a query point $q\in R^d$, and an approximation factor $\epsilon$, find a point $p'\subseteq P$ such that $D(p',q)\leq(1+\epsilon)D(p^*,q)$ 
	where $D(p^*,q)=\min\limits_{p\in P}\{D(p,q)\}$.
\end{definition}

\begin{remark}
	$p^*$ is called the nearest neighbor (NN), or exact NN to $q$, and $p'$ is called an $\epsilon$-NN to $q$.
\end{remark}

\begin{definition}[$(c,r)$-NN]\label{def:cr-nn}
	Given a set $P$ of points extracted from $R^d$, a query point $q\in R^d$, a query range $r$, and an approximation factor $c>1$, $(c,r)$-NN problem is to design an algorithm satisfying these: 
	\begin{enumerate}
		\item if there is a point $p_0\in P$ satisfying $p_0\in B(q,r)$, then return a point $p' \in P$ such that $p'\in B(q,c \cdot r)$;
		\item if $D(p,q) > c\cdot r$ for $\forall p \in P$, then return No.
	\end{enumerate}
\end{definition}

\begin{remark}
	There are multiple names referring to the same problem defined above. In the papers related to LSH, it is referred as $(c,r)$-NN. In \cite{Indyk1998}, it is called approximate Point Location in Equal Balls, which is denoted as $\epsilon$-PLEB where $\epsilon=c-1$. In more recent papers like \cite{Har-Peled2001}, it is called Approximate Near Neighbor problem. 	
\end{remark}

Next we give the definition of the reduction problem to be solved in this paper, i.e., the problem of reducing $\epsilon$-NN to $(c,r)$-NN.

\begin{definition}[Reduction Problem]\label{def:reduction}
	Given a set $P$ of points extracted from $R^d$, a query point $q\in R^d$, an approximation factor $\epsilon$, and an algorithm $\mathcal{A}$ for $(c,r)$-NN, the reduction problem is to find an $\epsilon$-NN to $q$ by invoking the algorithm $\mathcal{A}$ as an oracle.
\end{definition}

\begin{remark}
	To solve the reduction problem, a preprocessing phase is usually needed, which is to devise a data structure $\mathcal{D}$ based on $P$. Thus the problem of reducing $\epsilon$-NN to $(c,r)$-NN is divided into two phases. The first is data structure building phase, or preprocessing phase. The second is $\epsilon$-NN searching phase, or query phase. The $(c,r)$-NN algorithm $\mathcal{A}$ is invoked in query phase as an oracle, which characterizes the algorithm as a Turing reduction from $\epsilon$-NN to $(c,r)$-NN. 
	
	The time complexity of the algorithm for the reduction problem consists of two parts, namely, preprocessing time complexity and query time complexity. An important note is that the query time complexity is the number of invocations of $(c,r)$-NN algorithm $\mathcal{A}$. This is the well recognized method for analyze the time complexity of a Turing reduction.
\end{remark}

\subsection{Mathematical Preparations}\label{subsec:def-lema}
In this section we introduce some denotations and lemmas to support the idea of our algorithm for reducing $\epsilon$-NN to $(c,r)$-NN.

\subsubsection{Denotations}
Define a box $\mathfrak{b}$ in $R^d$ to be the product of $d$ intervals, i.e., $I_1\times I_2\times\cdots\times I_d$ where $I_i$ is either open, closed or semi-closed interval, $1\le i\le m$. 
A box is cubical iff all the $d$ intervals defining the box are of the same length. The side length a cubical box, which is the length of any interval defining the cubical box, is denoted as $len(\mathfrak{b})$. A minimal cubical box (MCB) for a point set $P$, denoted as $MCB(P)$, is the cubical box containing all the points in $P$ and has the minimal side length. Note that $MCB(P)$ may not be unique.

Given a point set $P$ and a box $\mathfrak{b}$, let $p\in \mathfrak{b}$ denote that a point $p\in P$ falls inside box $\mathfrak{b}$, and let $|\mathfrak{b}\cap P|$ denote the number of points in $P$ that falls inside $\mathfrak{b}$. We will use $|\mathfrak{b}|$ for short, if not causing ambiguity.

Given a collection of MCBs $\mathcal{B}=\{\mathfrak{b}_1,\cdots,\mathfrak{b}_m\}$, define $D_{max}(\mathfrak{b})$, $D_{min}(\mathfrak{b},\mathfrak{b}')$, $D_{max}(\mathfrak{b}, \mathfrak{b}')$ as follows:

$$D_{max}(\mathfrak{b})=\max\limits_{p_1,p_2\in \mathfrak{b}}{D(p_1,p_2)},\forall \mathfrak{b}\in \mathcal{B}$$
$$D_{min}(\mathfrak{b}=\mathfrak{b}')=\min\limits_{p\in \mathfrak{b},p'\in \mathfrak{b}'}\{D(p,p')\},
D_{max}(\mathfrak{b}, \mathfrak{b}')=\max\limits_{p\in \mathfrak{b},p'\in \mathfrak{b}'}\{D(p,p')\}, \forall \mathfrak{b},\mathfrak{b}'\in \mathcal{B}$$

With the above denotations, define $Est(\mathfrak{b})$ as follows:
\begin{equation}\label{eqtn:est(b)}
Est(\mathfrak{b})= \left\{
\begin{aligned}
D_{max}(\mathfrak{b}) &,& if\;|\mathfrak{b}\cap P|\ge 2\\
\min\limits_{\mathfrak{b}'\in Nbr(\mathfrak{b})} \{D_{max}(\mathfrak{b},\mathfrak{b}')\}&,& otherwise
\end{aligned}
\right.
\end{equation}
where $Nbr(\mathfrak{b}) = \left\{\mathfrak{b}'\mid{D_{min} (\mathfrak{b}, \mathfrak{b}')\le r}\right\}$, and the parameter $r$ should satisfy $r\ge Est(\mathfrak{b})$.\footnote{It can be verified that, as long as $r\ge Est(\mathfrak{b})$ is satisfied, the value of $r$ doesn't influence the value of $Est(\mathfrak{b})$. The specific value of $r$ will be shown latter.}

For an MCB $\mathfrak{b}$, we associate a ball with it.
Pick an arbitrary point $c_\mathfrak{b}\in \mathfrak{b}$, and let $r_\mathfrak{b}=Est(\mathfrak{b})$, then we have a ball $B(c_\mathfrak{b},r_\mathfrak{b})$. It is easily verified that every point in $\mathfrak{b}$ is within a distance of $Est(\mathfrak{b})$ from $c_\mathfrak{b}$, in another way to say, the ball $B(c_\mathfrak{b}, r_\mathfrak{b})$ encloses every point in $\mathfrak{b}$. 
We call $B(c_\mathfrak{b}, r_\mathfrak{b})$ the \emph{enclosing ball} for box $\mathfrak{b}$.

Next we start to introduce the lemmas while discussing different situations of $\epsilon$-NN search. In the following discussion, we will assume that we have an MCB $\mathfrak{b}$ of the input point set $P$, an enclosing ball $B(c_\mathfrak{b}, r_\mathfrak{b})$ of the MCB $\mathfrak{b}$, and a query point $q$. 

\subsubsection{Situation 1}
The first and an easy situation is that, if $q$ is far enough from $c_{\mathfrak{b}}$ then every point in $\mathfrak{b}$ is an $\epsilon$-NN to $q$.
The following value $T_1(\mathfrak{b})$ explains the threshold for \emph{far enough}, and Lemma \ref{lema:t1} depicts the situation discussed above.

\begin{definition}\label{def:t1}
	For an MCB of a point set $P$, define $T_1(\mathfrak{b}) =(1+2/\epsilon)r_\mathfrak{b}$. 
\end{definition}

\begin{lemma}\label{lema:t1}
	If $D(q,c_\mathfrak{b})\ge T_1(\mathfrak{b})$, then every point in $\mathfrak{b}$ is an $\epsilon$-NN to $q$.
\end{lemma}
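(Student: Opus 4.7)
The plan is a direct application of the triangle inequality around the center $c_\mathfrak{b}$ of the enclosing ball. Since $\mathfrak{b}$ is an MCB of the full point set $P$, every $P$-point — in particular an arbitrary $p \in \mathfrak{b} \cap P$ and the exact nearest neighbor $p^*$ of $q$ — lies inside the enclosing ball $B(c_\mathfrak{b}, r_\mathfrak{b})$. This gives the two inequalities $D(p, c_\mathfrak{b}) \le r_\mathfrak{b}$ and $D(p^*, c_\mathfrak{b}) \le r_\mathfrak{b}$, which are the only geometric facts I need.

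Next I would apply the triangle inequality twice, to the triples $(p, c_\mathfrak{b}, q)$ and $(p^*, c_\mathfrak{b}, q)$, obtaining the upper bound $D(p, q) \le D(c_\mathfrak{b}, q) + r_\mathfrak{b}$ and the matching lower bound $D(p^*, q) \ge D(c_\mathfrak{b}, q) - r_\mathfrak{b}$. Under the hypothesis $D(c_\mathfrak{b}, q) \ge T_1(\mathfrak{b}) = (1 + 2/\epsilon)\, r_\mathfrak{b}$ this lower bound is strictly positive, so dividing by $D(p^*, q)$ is safe. It then suffices to verify the scalar inequality $D(c_\mathfrak{b}, q) + r_\mathfrak{b} \le (1+\epsilon)\bigl(D(c_\mathfrak{b}, q) - r_\mathfrak{b}\bigr)$, and a one-line rearrangement reduces this to $D(c_\mathfrak{b}, q) \ge (1 + 2/\epsilon)\, r_\mathfrak{b}$, which is exactly the hypothesis. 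Combining with the triangle-inequality bounds above gives $D(p, q) \le (1+\epsilon)\, D(p^*, q)$, so $p$ is an $\epsilon$-NN as required.

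There is no substantive obstacle in this argument; it is essentially a calibration calculation that confirms the definition of $T_1(\mathfrak{b})$. The only thing to record beyond the standard two-sided triangle inequality is that the enclosing ball $B(c_\mathfrak{b}, r_\mathfrak{b})$ really does cover every $P$-point that enters the argument, which the paper asserts in the paragraph introducing the enclosing ball just before $T_1$ is defined.
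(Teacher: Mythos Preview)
Your proposal is correct and follows essentially the same route as the paper's own proof: both apply the triangle inequality through $c_\mathfrak{b}$ to obtain the upper bound $D(q,p)\le D(q,c_\mathfrak{b})+r_\mathfrak{b}$ and the lower bound $D(q,p^*)\ge D(q,c_\mathfrak{b})-r_\mathfrak{b}$, and then reduce the target inequality to the scalar condition $D(q,c_\mathfrak{b})+r_\mathfrak{b}\le(1+\epsilon)\bigl(D(q,c_\mathfrak{b})-r_\mathfrak{b}\bigr)$, which rearranges to the hypothesis. The only cosmetic difference is that the paper writes the lower bound in terms of $D(q,\mathfrak{b})=\min_{p\in\mathfrak{b}}D(q,p)$ rather than $D(q,p^*)$ and does the algebra first; your version is equivalent.
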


\begin{proof}
	If $|\mathfrak{b}|=1$ then the lemma is trivial. 
	Assume $|\mathfrak{b}|\ge 2$. 	
	Starting from the given condition, we first prove $(1+\epsilon) (D(q,c_\mathfrak{b})-r_\mathfrak{b})\ge D(q,c_\mathfrak{b})+r_\mathfrak{b}$ as follows:
	\[\begin{split}
	D(q,c_\mathfrak{b})&\ge (1+2/\epsilon)r_\mathfrak{b} \Rightarrow\\
	\epsilon\cdot D(q,c_\mathfrak{b})&\ge (\epsilon+2)r_\mathfrak{b} \Rightarrow\\
	(1+\epsilon)D(q,c_\mathfrak{b})&\ge D(q,c_\mathfrak{b})+(\epsilon+2)r_\mathfrak{b} \Rightarrow\\
	(1+\epsilon) (D(q,c_\mathfrak{b})-r_\mathfrak{b})&\ge  D(q,c_\mathfrak{b})+r_\mathfrak{b}.\\
	\end{split}\]
	
	Let the minimal distance from the query point $q$ to a point in $\mathfrak{b}$ be $D(q,\mathfrak{b})$. Clearly, we have $D(q,\mathfrak{b})\ge D(q,c_\mathfrak{b})-r_\mathfrak{b}$. Then we have $D(q,p)\le D(q,c_\mathfrak{b})+r_\mathfrak{b} \le (1+\epsilon)(D(q,c_\mathfrak{b})-r_\mathfrak{b})\le (1+\epsilon)D(q,\mathfrak{b})$ for $\forall p\in \mathfrak{b}$. This indicates that every point $p\in \mathfrak{b}$ is $\epsilon$-NN to q.
	\qed
\end{proof}	

\subsubsection{Situation 2}
If $q$ is not as far from $c_\mathfrak{b}$ as a distance of $T_1(\mathfrak{b})$, i.e., $D(q,c_{\mathfrak{b}}) < T_1(\mathfrak{b})$, then we split $\mathfrak{b}$ into a set of sub-boxes $\{\mathfrak{b}_1,\cdots, \mathfrak{b}_m\}$, and calculate the enclosing balls $B(c_{\mathfrak{b}_i},r_{\mathfrak{b}_i})$ for each box $\mathfrak{b}_i, 1\le i\le m$. The next situation is that if $q$ is still far enough from each point in $\{c_{\mathfrak{b}_1},\cdots, c_{\mathfrak{b}_m}\}$, i.e., the centers of the enclosing balls, then we can still tell that every point in $\mathfrak{b}$ is an $\epsilon$-NN to $q$. We give another threshold $T_2(\mathfrak{b})$ based on this idea, and formalize the idea into Lemma \ref{lema:t2}.This lemma also discusses the quantitative relationship between $T_2(\mathfrak{b})$ and $T_1(\mathfrak{b})$.

\begin{definition}\label{def:rmax-box}
	For an MCB of a point set $P$, split $\mathfrak{b}$ into a set of sub-boxes $\{\mathfrak{b}_1,\cdots, \mathfrak{b}_m\}$. Each of these sub-boxes is an MCB of a point set $P'\subset P$. Then let
	$B(c_{\mathfrak{b}_i},r_{\mathfrak{b}_i})$ be the enclosing ball of sub-box $\mathfrak{b}_i$, $1\leq i \leq m$.
	Define $rmax_\mathfrak{b} = \max\limits_{i}{\{r_{\mathfrak{b}_i}\}}$. In case of $|\mathfrak{b}|=1$, let $rmax_\mathfrak{b}=0$.
\end{definition}

\begin{definition}\label{def:t2}
	Define $T_2(\mathfrak{b})=r_\mathfrak{b} +(1+2/\epsilon)rmax_\mathfrak{b}$.
\end{definition}

\begin{lemma}\label{lema:t2}
	We have the following statements:
	\begin{enumerate}
		\item if $D(q,c_\mathfrak{b})\ge T_2(\mathfrak{b})$, then every point in $\mathfrak{b}$ is $\epsilon$-NN to $q$;
		\item if $rmax_\mathfrak{b} < \frac{2}{2+\epsilon}r_\mathfrak{b}$, then $T_2 (\mathfrak{b}) < T_1(\mathfrak{b})$.
	\end{enumerate}
\end{lemma}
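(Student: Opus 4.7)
The plan splits cleanly across the two claims. Part 2 reduces to a direct algebraic comparison, while part 1 bootstraps Lemma~\ref{lema:t1} applied to each sub-box and then lifts the per-sub-box guarantees into the global $\epsilon$-NN statement.

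For part 2 I would simply expand $T_1(\mathfrak{b}) - T_2(\mathfrak{b}) = (2/\epsilon)\,r_\mathfrak{b} - (1 + 2/\epsilon)\,rmax_\mathfrak{b}$, which is strictly positive exactly when $rmax_\mathfrak{b} < \frac{2/\epsilon}{1 + 2/\epsilon}\,r_\mathfrak{b} = \frac{2}{2+\epsilon}\,r_\mathfrak{b}$, matching the stated threshold.

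For part 1, the first step is to verify that the $T_2$-hypothesis forces every sub-box to satisfy the $T_1$-precondition of Lemma~\ref{lema:t1}. Because $c_\mathfrak{b}$ and $c_{\mathfrak{b}_i}$ are both points of $P\cap\mathfrak{b}$, we have $D(c_\mathfrak{b},c_{\mathfrak{b}_i}) \le D_{max}(\mathfrak{b}) \le r_\mathfrak{b}$, so by the triangle inequality
\[
D(q,c_{\mathfrak{b}_i}) \ge D(q,c_\mathfrak{b}) - r_\mathfrak{b} \ge T_2(\mathfrak{b}) - r_\mathfrak{b} = (1+2/\epsilon)\,rmax_\mathfrak{b} \ge (1+2/\epsilon)\,r_{\mathfrak{b}_i} = T_1(\mathfrak{b}_i).
\]
Lemma~\ref{lema:t1} applied to each $\mathfrak{b}_i$ then certifies that every $p\in\mathfrak{b}_i$ is an $\epsilon$-NN to $q$ among $P\cap\mathfrak{b}_i$; extracting the tighter estimate embedded in that proof gives $D(q,p)\le(1+\epsilon)(D(q,c_{\mathfrak{b}_i})-r_{\mathfrak{b}_i})$. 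The remaining step is to lift this local guarantee to a \emph{global} $\epsilon$-NN bound: the global NN $p^*$ lies in some sub-box $\mathfrak{b}_j$, and applying Lemma~\ref{lema:t1} to $\mathfrak{b}_j$ also supplies $D(q,c_{\mathfrak{b}_j})\le(1+\epsilon)\,D(q,p^*)$; chaining via $D(q,c_{\mathfrak{b}_i}) \le D(q,c_{\mathfrak{b}_j}) + r_\mathfrak{b}$ and absorbing the residual $r_\mathfrak{b}$ using $D(q,c_\mathfrak{b})\ge T_2(\mathfrak{b})$ should close the required inequality $D(q,p)\le(1+\epsilon)\,D(q,p^*)$.

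I expect the hardest part to be exactly this final stitching: the naive chain of triangle inequalities through $c_{\mathfrak{b}_i}$ and $c_{\mathfrak{b}_j}$ leaves an additive $r_\mathfrak{b}$ slack that a single $(1+\epsilon)$ factor does not immediately absorb. Closing the gap will likely require simultaneously using the enclosing-ball radius bound $D(p,c_\mathfrak{b})\le r_\mathfrak{b}$ together with the pairwise-diameter bound $D(p_1,p_2)\le D_{max}(\mathfrak{b}) = r_\mathfrak{b}$ on $P\cap\mathfrak{b}$, and choosing whichever is tight for the configuration at hand so that the specific coefficients in $T_2(\mathfrak{b}) = r_\mathfrak{b} + (1+2/\epsilon)\,rmax_\mathfrak{b}$ exactly balance the slack; part 2 is precisely the accounting identity that certifies this balance is feasible when $rmax_\mathfrak{b}$ is small.
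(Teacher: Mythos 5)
Your part 2 is correct and matches what the paper intends (the paper merely writes ``can be easily verified'' and omits the one-line algebra). For part 1, your chain establishing $D(q,c_{\mathfrak{b}_i})\ge T_2(\mathfrak{b})-r_\mathfrak{b}=(1+2/\epsilon)rmax_\mathfrak{b}\ge T_1(\mathfrak{b}_i)$ reproduces the paper's proof exactly. Where you and the paper part ways is precisely the step you call ``the final stitching'': the paper applies Lemma~\ref{lema:t1} to each $\mathfrak{b}_i$ and then immediately writes ``since the subscript $i$ is arbitrary, we conclude that every point in $\mathfrak{b}$ is $\epsilon$-NN to $q$,'' with no lifting step at all. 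You are right to be suspicious: Lemma~\ref{lema:t1} applied to $\mathfrak{b}_i$ yields $D(q,p)\le(1+\epsilon)\min_{p'\in\mathfrak{b}_i}D(q,p')$, a guarantee \emph{relative to $\mathfrak{b}_i$}, and $\min_{p'\in\mathfrak{b}_i}D(q,p')$ can strictly exceed the box-wide minimum $D(q,p^*)$ when $p^*$ lives in a different sub-box $\mathfrak{b}_j$.

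The unabsorbed slack you flag does not in fact disappear, so you should not spend effort trying to close it with more triangle inequalities. Writing it out: you get $D(q,p)\le(1+\epsilon)\bigl(D(q,c_{\mathfrak{b}_i})-r_{\mathfrak{b}_i}\bigr)$ while $(1+\epsilon)D(q,p^*)\ge(1+\epsilon)\bigl(D(q,c_{\mathfrak{b}_j})-r_{\mathfrak{b}_j}\bigr)$, and nothing forces $D(q,c_{\mathfrak{b}_i})-r_{\mathfrak{b}_i}\le D(q,c_{\mathfrak{b}_j})-r_{\mathfrak{b}_j}$ when $i\ne j$. A concrete one-dimensional instance makes this bite: take $P=\{0,1,10,11\}$, $\epsilon=1/2$, sub-boxes $\{0,1\}$ and $\{10,11\}$, so $r_\mathfrak{b}=11$, $rmax_\mathfrak{b}=1$, $T_2(\mathfrak{b})=11+5\cdot1=16$, and choose $c_\mathfrak{b}=0$ and $q=16$. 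Then $D(q,c_\mathfrak{b})=16\ge T_2(\mathfrak{b})$, the exact NN of $q$ is $11$ at distance $5$, yet the point $0\in\mathfrak{b}$ sits at distance $16>(1+\epsilon)\cdot5=7.5$, so the stated conclusion of part 1 fails. The gap is therefore not a shortcoming of your write-up that more cleverness would fix; it is a genuine gap in the paper's own proof (and apparently in the lemma as stated), and the honest move is to record that Lemma~\ref{lema:t1} only gives a sub-box-relative guarantee and that the implication to a box-wide $\epsilon$-NN claim is not justified by the given argument.
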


\begin{proof}
	For the first statement, if $|\mathfrak{b}|=1$ then it is trivial. Assume $|\mathfrak{b}|\ge 2$. Since the center of the enclosing ball of the box $\mathfrak{b}$ is chosen as any point in $\mathfrak{b}$, it is easy to see that $c_{\mathfrak{b}_i} \in \mathfrak{b}$ for each sub-box $\mathfrak{b}_i$. 
	This in turn indicates that $D(c_{\mathfrak{b}_i},c_\mathfrak{b})\le r_\mathfrak{b}$. Thus, if $D(q,c_\mathfrak{b})\ge T_2(\mathfrak{b})$, we have $D(q,c_{\mathfrak{b}_i})\ge D(q,c_\mathfrak{b}) - D(c_\mathfrak{b},c_{\mathfrak{b}_i}) \ge T_2(\mathfrak{b})-r_\mathfrak{b} =(1+2/\epsilon)rmax_\mathfrak{b}\ge (1+2/\epsilon)r_{\mathfrak{b}_i}= T_1(\mathfrak{b}_i)$. 
	According to Lemma \ref{lema:t1} we know that every point in $\mathfrak{b}_i$ is $\epsilon$-NN to $q$. 
	Since the subscript $i$ is arbitrary in $[1,m]$, we conclude that every point in $\mathfrak{b}$ is $\epsilon$-NN to $q$.
	
	The second statement can be easily verified, and the proof is omitted here.
	\qed
\end{proof}

\subsubsection{Situation 3}
If $q$ is still not as far from $c_\mathfrak{b}$ as a distance of $T_2(\mathfrak{b})$, it is time to ask the algorithm of $(c,r)$-NN for help.
Let $\mathcal{A}(Q,\mathfrak{q},c,r)$ be any algorithm for solving $(c,r)$-NN, where $Q$ is the input point set, $\mathfrak{q}$ is the query point, $r$ is the query range, and $c$ is the approximation factor. The meanings of these four parameters are already given in Definition \ref{def:cr-nn}. The goal of invoking $\mathcal{A}$ is that, if $\mathcal{A}$ answers $No$ then still every point in $\mathfrak{b}$ is an $\epsilon$-NN to $q$. The following lemma shows how to set the four input parameters to fulfill the goal.

\begin{lemma}\label{lema:crnn-param}
	Let $\mathcal{A}(Q,\mathfrak{q},c,r)$ be any algorithm for $(c,r)$-NN. We have the following statements:
	\begin{enumerate}
		\item if we set $Q=\{c_{\mathfrak{b}_1},\cdots c_{\mathfrak{b}_m}\}$,  $\mathfrak{q}=q$, $r=\max \limits_{i} \{T_2 (\mathfrak{b}_i)\}$, and let $c$ satisfy $c\cdot r= \max \limits_{i}\{T_1 (\mathfrak{b}_i)\}$, and invoke $\mathcal{A}(Q,\mathfrak{q},c,r)$, then if $\mathcal{A}$ returns $No$, we can pick any point in $\mathfrak{b}$ as the answer of $\epsilon$-NN to $q$;
		\item if $rmax_{\mathfrak{b}_i} <\frac{2}{2+\epsilon}r_{\mathfrak{b}_i}$ holds for each $\mathfrak{b}_i$, $1\le i\le m$, then our settings for $c$ and $r$ satisfy the requirement of $(c,r)$-NN problem definition. i.e. $c>1$.
	\end{enumerate}
\end{lemma}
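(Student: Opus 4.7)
The plan is to unpack the contract of the oracle $\mathcal{A}$ and combine it with Lemma \ref{lema:t1} for part (1), and then derive $c>1$ from Lemma \ref{lema:t2}(2) for part (2).

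For part (1), I would first recall that, by Definition \ref{def:cr-nn}, if $\mathcal{A}(Q,\mathfrak{q},c,r)$ returns No, then every point in $Q$ is at distance strictly greater than $c\cdot r$ from $\mathfrak{q}$. Applied to our setting, this yields $D(c_{\mathfrak{b}_i},q)>c\cdot r=\max_j T_1(\mathfrak{b}_j)\ge T_1(\mathfrak{b}_i)$ for every $i$. I would then invoke Lemma \ref{lema:t1} on each sub-box $\mathfrak{b}_i$ to conclude that every point in $\mathfrak{b}_i$ is an $\epsilon$-NN to $q$. Since $\mathfrak{b}=\bigcup_i \mathfrak{b}_i$, any point in $\mathfrak{b}$ is an $\epsilon$-NN to $q$, as claimed.

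For part (2), I would observe that, by the choices $r=\max_i T_2(\mathfrak{b}_i)$ and $c\cdot r=\max_i T_1(\mathfrak{b}_i)$, the requirement $c>1$ is equivalent to $\max_i T_1(\mathfrak{b}_i)>\max_i T_2(\mathfrak{b}_i)$. The hypothesis $rmax_{\mathfrak{b}_i}<\frac{2}{2+\epsilon}r_{\mathfrak{b}_i}$ for every $i$ lets me apply Lemma \ref{lema:t2}(2) inside each sub-box to obtain $T_2(\mathfrak{b}_i)<T_1(\mathfrak{b}_i)$ for every $i$. Picking $i^{*}$ attaining $\max_i T_2(\mathfrak{b}_i)$, I get $\max_i T_2(\mathfrak{b}_i)=T_2(\mathfrak{b}_{i^{*}})<T_1(\mathfrak{b}_{i^{*}})\le \max_i T_1(\mathfrak{b}_i)$, which is exactly $r<c\cdot r$, i.e., $c>1$.

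I do not anticipate any real obstacle: both parts are essentially bookkeeping that combines the $(c,r)$-NN contract with previously proved lemmas. The only small care-point is the argument in part (2) that a strict pointwise inequality between two families of positive reals survives taking the maximum, which I handle by passing to the maximizer of the smaller family rather than of the larger one.
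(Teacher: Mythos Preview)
Your argument for part (2) is correct and matches the paper's exactly. However, part (1) has a genuine gap.

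You claim that ``by Definition~\ref{def:cr-nn}, if $\mathcal{A}(Q,\mathfrak{q},c,r)$ returns \textit{No}, then every point in $Q$ is at distance strictly greater than $c\cdot r$ from $\mathfrak{q}$.'' This reads condition~2 of Definition~\ref{def:cr-nn} as a biconditional, which it is not. The definition only says that \emph{if} all points are beyond $c\cdot r$ \emph{then} the algorithm returns \textit{No}; it does not promise the converse. In the undefined zone where the nearest point of $Q$ lies in the interval $(r,\,c\cdot r]$, the algorithm is free to return either \textit{No} or a witness point. Hence from a \textit{No} answer you may only deduce (via the contrapositive of condition~1) that no point of $Q$ lies in $B(q,r)$, i.e.\ $D(q,c_{\mathfrak{b}_i})> r$ for all $i$, not $D(q,c_{\mathfrak{b}_i})> c\cdot r$.

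The fix is immediate and is what the paper does: since $r=\max_j T_2(\mathfrak{b}_j)\ge T_2(\mathfrak{b}_i)$, a \textit{No} answer gives $D(q,c_{\mathfrak{b}_i})\ge T_2(\mathfrak{b}_i)$ for every $i$, and then Lemma~\ref{lema:t2}(1) (applied to each $\mathfrak{b}_i$, using its own sub-boxes) yields that every point of $\mathfrak{b}_i$ is an $\epsilon$-NN to $q$. So you should invoke Lemma~\ref{lema:t2} with the threshold $r$, not Lemma~\ref{lema:t1} with the threshold $c\cdot r$. Note, incidentally, that this is precisely why the lemma is set up with \emph{both} thresholds $r=\max_i T_2(\mathfrak{b}_i)$ and $c\cdot r=\max_i T_1(\mathfrak{b}_i)$: the $T_2$ side is what makes the \textit{No} branch work, while the $T_1$ side governs the quality of any witness point returned.
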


\begin{proof} 	
	For the first statement, according to Definition \ref{def:cr-nn} of $(c,r)$-NN problem, if there exists a point in $\{c_{\mathfrak{b}_1},\cdots,c_{\mathfrak{b}_m}\}$ lying inside $B(q,r)$ where $r=\max \limits_{i} \{T_2 (\mathfrak{b}_i)\}$, then $\mathcal{A}$ will return some point $c_{\mathfrak{b}_j}$ such that $c_{\mathfrak{b}_j}\in B(q,cr)$ where $c\cdot r=\max \limits_{i} \{T_1 (\mathfrak{b}_i)\}$. If all points in $\{c_{\mathfrak{b}_1},\cdots,c_{\mathfrak{b}_m}\}$ are outside $B(q,cr)$, then $\mathcal{A}$ will return $No$. If the minimal distance from $q$ to $\{c_{\mathfrak{b}_1}, \cdots,c_{\mathfrak{b}_m}\}$ falls in the undefined range $[r,cr]$, $\mathcal{A}$ will return either $No$ or a point $c_{\mathfrak{b}_j}$ such that $r\le D(q,c_{\mathfrak{b}_j})\le cr$.
	
	On the other hand, according to Lemma \ref{lema:t1} and \ref{lema:t2}, if all the points in $\{c_{\mathfrak{b}_1},\cdots,c_{\mathfrak{b}_m}\}$ satisfy  $D(q,c_{\mathfrak{b}_i})\ge r=\max\limits_{i} \{T_2 (\mathfrak{b}_i)\}\ge T_2(\mathfrak{b}_i)$, then all points in all $\mathfrak{b}_i$ are $\epsilon$-NN to $q$, $1\le i\le m$, and these are already all points in $\mathfrak{b}$.

	Combining the two parts of analysis, we can conclude that if $\mathcal{A}$ returns $No$, it must be the situation that the minimal distance from $q$ to $\{c_{\mathfrak{b}_1}, \cdots,c_{\mathfrak{b}_m}\}$ is not less than $r=\max \limits_{i} \{T_2 (\mathfrak{b}_i)\}$. Equivalently, $D(q,c_{\mathfrak{b}_i})\ge \max \limits_{i} \{T_2 (\mathfrak{b}_i)\}\ge T_2(\mathfrak{b}_i)$ holds for each box $\mathfrak{b}_i$, $1\le i\le m$. Then according to Lemma \ref{lema:t2}, every point in $\mathfrak{b}_i$ is an $\epsilon$-NN to $q$. Since the subscript $i$ is arbitrary in $[1,m]$, we conclude that all points in $\mathfrak{b}$ are $\epsilon$-NN to $q$.

	For the second statement, if $rmax_{\mathfrak{b}_i} <\frac{2}{2+\epsilon}r_{\mathfrak{b}_i}$, then according to Lemma \ref{lema:t2}, $T_2(\mathfrak{b}_i) < T_1(\mathfrak{b}_i)$, $1\le i\le m$. Taking maximum on both sides of the inequality, we have $\max\limits_{i} \{T_2 (\mathfrak{b}_i)\}<\max\limits_{i} \{T_1 (\mathfrak{b}_i)\}$. Since we set $r=\max \limits_{i} \{T_2 (\mathfrak{b}_i)\}$, and let $c$ satisfy $c\cdot r= \max \limits_{i}\{T_1 (\mathfrak{b}_i)\}$, we have $r<cr$ which induces that $c>1$. Then  the proof is done.
	\qed
\end{proof}

\subsubsection{Situation 4}
As what is said in Lemma \ref{lema:crnn-param}, if the algorithm $\mathcal{A}$ returns $No$ then the search of $\epsilon$-NN terminates with returning an arbitrary point in $\mathfrak{b}$. According to Definition \ref{def:cr-nn}, $\mathcal{A}$ can also return some point $c_{\mathfrak{b}_i}\in Q$ other than $No$. In that case the search must continues. At first glance, the same procedure should be recursively carried out, by applying Lemma \ref{lema:t1}, \ref{lema:t2}, \ref{lema:crnn-param} one by one on box $\mathfrak{b}_i$, where the point $c_{\mathfrak{b}_i}$ returned by $\mathcal{A}$ is the center of the enclosing ball of box $\mathfrak{b}_i$. 
However, to guarantee that the algorithm returns a correct $\epsilon$-NN, the box considered by the algorithm must encloses the exact NN $p^*$. But the box $\mathfrak{b}_i$ may not enclose $p^*$, which would ruin the correctness of the algorithm. 
Thus, we need to expand the search range to the boxes near to $\mathfrak{b}_i$.
The following Lemma \ref{lema:nn-in-nbr} gives the bounds of the search range and ensures that $p^*$ lies in the range.

\begin{definition}\label{def:rmax-collection}
	For a collection of MCBs $\mathcal{B}=\{\mathfrak{b}_1,\cdots, \mathfrak{b}_m\}$, let $rmax_{\mathfrak{b}_i}$ be defined as Definition \ref{def:rmax-box} for each $\mathfrak{b}_i$, $1\le i\le m$. Then define $rmax_{\mathcal{B}}=\max\limits_{\mathfrak{b}_i\in \mathcal{B}}\{rmax_{\mathfrak{b}_i}\}$.  
\end{definition}

\begin{definition}\label{def:nbr}
	Define $Nbr(\mathfrak{b})$ as
	$$\left\{\mathfrak{b}'\in\mathcal{B} \mid {D(c_{\mathfrak{b}'},c_{\mathfrak{b}}) \le (3+4\epsilon)rmax_{\mathcal{B}_{\mathfrak{b}_\mathrm{s}}}} \right\}$$ where $\mathcal{B}_{\mathfrak{b}_\mathrm{s}}= Nbr(\mathfrak{b}_\mathrm{s})$ and $\mathfrak{b}_\mathrm{s}$ is the super box of $\mathfrak{b}$.
\end{definition}

\begin{remark}
	The definition of $Nbr$ sets is a \emph{recursive definition}. For a box $\mathfrak{b}$, its $Nbr(\mathfrak{b})$ set is defined based on the $Nbr(\mathfrak{b}_s)$ set of its super box $\mathfrak{b}_s$. It requires that the boxes are recursively split, which can be represented as a tree structure. The formal description of the tree structure is given in Section \ref{subsec:preprocessing}.
\end{remark}

\begin{lemma}\label{lema:nn-in-nbr}
	Given the query point $q$, and a collection of boxes $\{\mathfrak{b}_1,\cdots,\mathfrak{b}_m\}$, if we find a box $\mathfrak{b}_i$ satisfying $D(q, c_{\mathfrak{b}_i})\le \max\limits_{i} \{T_1 (\mathfrak{b}_i)\}$, then the nearest neighbor of $q$ lies in and can only lie in $Nbr(\mathfrak{b}_i)$, i.e., $p^*\in Nbr(\mathfrak{b}_i)$.
\end{lemma}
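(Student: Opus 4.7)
The plan is to let $\mathfrak{b}^* \in \mathcal{B}$ denote the box at the current recursion level that contains the exact nearest neighbor $p^*$, and then to argue that $\mathfrak{b}^* \in Nbr(\mathfrak{b}_i)$, which by Definition~\ref{def:nbr} reduces to establishing the single inequality $D(c_{\mathfrak{b}^*}, c_{\mathfrak{b}_i}) \le (3+4\epsilon)\, rmax_{\mathcal{B}_{\mathfrak{b}_s}}$. The existence of such a $\mathfrak{b}^*$ is maintained as an invariant of the recursion: every box at the current level is a sub-box of some member of $\mathcal{B}_{\mathfrak{b}_s} = Nbr(\mathfrak{b}_s)$, and by the inductive hypothesis $p^*$ already lies in some box of that parent collection, so after splitting it lies in exactly one child box.

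The main estimate is a single triangle-inequality chain routed through $p^*$ and $q$, namely $D(c_{\mathfrak{b}^*}, c_{\mathfrak{b}_i}) \le D(c_{\mathfrak{b}^*}, p^*) + D(p^*, q) + D(q, c_{\mathfrak{b}_i})$. I would then bound the three summands in turn. The first is at most $r_{\mathfrak{b}^*}$, since $c_{\mathfrak{b}^*}$ and $p^*$ both lie in $\mathfrak{b}^*$ and $Est(\mathfrak{b}^*)$ dominates pairwise distances inside $\mathfrak{b}^*$ (the singleton case being trivial because $c_{\mathfrak{b}^*} = p^*$ then). For the second, the key observation is that by the notational convention of the paper $c_{\mathfrak{b}_i}$ is itself a point of $P$, so the exact-NN property of $p^*$ gives $D(p^*, q) \le D(c_{\mathfrak{b}_i}, q)$. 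The third term is bounded by hypothesis by $\max_j T_1(\mathfrak{b}_j) = (1+2/\epsilon)\max_j r_{\mathfrak{b}_j}$.

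Combining these yields $D(c_{\mathfrak{b}^*}, c_{\mathfrak{b}_i}) \le r_{\mathfrak{b}^*} + 2(1+2/\epsilon)\max_j r_{\mathfrak{b}_j}$. The closing step is to dominate each radius appearing on the right by $rmax_{\mathcal{B}_{\mathfrak{b}_s}}$. Every $\mathfrak{b}_j \in \mathcal{B}$ (including $\mathfrak{b}^*$) is a sub-box of some element of $\mathcal{B}_{\mathfrak{b}_s} = Nbr(\mathfrak{b}_s)$, so by Definitions~\ref{def:rmax-box} and~\ref{def:rmax-collection} its radius is at most $rmax$ of its parent, which is in turn at most $rmax_{\mathcal{B}_{\mathfrak{b}_s}}$. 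Adding the factors produces a bound of the form $(3 + 4/\epsilon)\, rmax_{\mathcal{B}_{\mathfrak{b}_s}}$, matching in shape the threshold used in Definition~\ref{def:nbr}.

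The main obstacle will not be the arithmetic but the bookkeeping around the recursive $Nbr$ structure: one has to carefully pin down how the current collection $\mathcal{B}$, the focus box $\mathfrak{b}_i$, its super box $\mathfrak{b}_s$, and the parent collection $\mathcal{B}_{\mathfrak{b}_s}$ are related across recursion levels, and carry along the invariant that $p^*$ always resides in some box of the collection presently under consideration. Once those relationships and the invariant are pinned down, the triangle-inequality estimate is routine and the target constant drops out.
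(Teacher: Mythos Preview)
Your proposal is correct and follows essentially the same route as the paper: route the triangle inequality through $q$ and $p^*$, use that $c_{\mathfrak{b}_i}$ is a data point (so $D(q,p^*)\le D(q,c_{\mathfrak{b}_i})$), bound $D(c_{\mathfrak{b}^*},p^*)\le r_{\mathfrak{b}^*}$, and collect the constant $3+4/\epsilon$. The paper additionally writes out the contrapositive (``if $\mathfrak{b}_k\notin Nbr(\mathfrak{b}_i)$ then $p^*\notin\mathfrak{b}_k$'') to justify the phrase ``can only lie in,'' but that is logically redundant with what you already prove; your more explicit handling of the recursive invariant and of $rmax_{\mathcal{B}_{\mathfrak{b}_s}}$ versus the single-parent $rmax_\mathfrak{b}$ is, if anything, slightly cleaner than the paper's own proof.
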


\begin{proof}
	We have known that $D(q,c_{\mathfrak{b}_i})\le\max\limits_{i}\{T_1 (\mathfrak{b}_i)\}$. By the definition of the enclosing ball of box $\mathfrak{b}$, the center of the ball is an arbitrary point picked from $\mathfrak{b}$. Then we have $c_{\mathfrak{b}_i}\in \mathfrak{b}_i$ for box $\mathfrak{b}_i$. Furthermore, since $\mathfrak{b}_i$ is a sub-box of $\mathfrak{b}$, then $c_{\mathfrak{b}_i}\in \mathfrak{b}$ too. 
	Thus $D(q,c_{\mathfrak{b}_i})$ can serve as an upper bound of the distance from  $q$ to its nearest neighbor. Let $p^*$ denote the exact nearest neighbor of $q$ in $\mathfrak{b}$, and we have $D(q,p^*)\le D(q,c_{\mathfrak{b}_i})\le \max\limits_{i} \{T_1 (\mathfrak{b}_i)\}$. Further we have $\max\limits_{i} \{T_1 (\mathfrak{b}_i)\}=\max\limits_{i} \{(1+2/\epsilon)r_{\mathfrak{b}_i}\} =(1+2/\epsilon)rmax_\mathfrak{b}$. Then $D(q,p^*)\le (1+2/\epsilon)rmax_\mathfrak{b}$.
	
	If $p^*$ lies in some box $\mathfrak{b}_j$, we prove $D(c_{\mathfrak{b}_i},c_{\mathfrak{b}_j})\le (3+4/\epsilon)rmax_\mathfrak{b}$. If $i=j$ then this trivially holds. If not, first we have $D(p^*, c_{\mathfrak{b}_j}) \le r_{\mathfrak{b}_j} \le rmax_\mathfrak{b}$ since $p^*\in \mathfrak{b}_j$. Thus, $D(c_{\mathfrak{b}_i},c_{\mathfrak{b}_j})\le D(c_{\mathfrak{b}_i},q) + D(q,p^*)+D(p^*,c_{\mathfrak{b}_j})\le (1+2/\epsilon)rmax_\mathfrak{b} +(1+2/\epsilon)rmax_\mathfrak{b}+rmax_\mathfrak{b}= (3+4/\epsilon)rmax_\mathfrak{b}$. This indicates that $\mathfrak{b}_j\in Nbr(\mathfrak{b}_i)$.
	
	For a box $\mathfrak{b}_k$ out of $Nbr(\mathfrak{b}_i)$, i.e. $D(c_{\mathfrak{b}_k},c_{\mathfrak{b}_i}) > (3+4/\epsilon)rmax_\mathfrak{b}$,  suppose in contrary that the nearest neighbor $p^*\in \mathfrak{b}_k$, which indicates $D(p^*,c_{\mathfrak{b}_k})\le r_{\mathfrak{b}_k}$. Then we have $D(q,p^*)\ge D(c_{\mathfrak{b}_k},c_{\mathfrak{b}_i}) - D(c_{\mathfrak{b}_i},q) - D(c_{\mathfrak{b}_k},p^*)> (3+4/\epsilon)rmax_\mathfrak{b} - (1+2/\epsilon)rmax_\mathfrak{b}-r_{\mathfrak{b}_k}>(1+2/\epsilon)rmax_\mathfrak{b}$. This conflicts with the conclusion we get above. 
	
	So far both the sufficient and necessary conditions are proved, and the proof is done.
	\qed
\end{proof}

We are done introducing the mathematical preparations. In the next section we will propose our algorithm based on the lemmas given above.

\section{Algorithms}\label{sec:alg}
In this section we propose our algorithm for reducing $\epsilon$-NN to $(c,r)$-NN, including the preprocessing and query algorithm.

\subsection{Preprocessing}\label{subsec:preprocessing}
Our preprocessing algorithm mainly consists of two sub-procedures. One is to build the box split tree $T$, and the other is to construct the $Nbr$ sets.

\subsubsection{Building the box split tree}
We first give the definition of the box split tree.

\begin{definition}[Box split tree]\label{def:split-tree}
	Given a point set $P$ and its MCB $\mathfrak{b}_P$, a tree $T$ is a \emph{box split tree} iff:
	\begin{enumerate}
		\item the root of $T$ is $\mathfrak{b}_P$;
		\item each non-root node of $T$ is an MCB of a point set $P'\subset P$;
		\item if box $\mathfrak{b}'$ is a sub-box of $\mathfrak{b}$, then there is an edge between the node for $\mathfrak{b}$ and the node for $\mathfrak{b}'$ in $T$;
		\item each node has at least $2$ child nodes, and at most $|P|$ child nodes;
		\item $rmax_{\mathfrak{b}}< \frac{2}{2+\epsilon} r_{\mathfrak{b}}$ holds for each box $\mathfrak{b}$ in $T$.
	\end{enumerate}
	Further, $T$ is fully built iff each box at the leaf nodes of $T$ contains only one point.
\end{definition}

\begin{remark}
	The fifth term is required by the second statement of Lemma \ref{lema:crnn-param}.
\end{remark}

We use a box split method to build the box split tree. This method is originally proposed in \cite{Vaidya1986}, and also used in several other papers \cite{Feder1988,Callahan1995}. It starts from the MCB $\mathfrak{b}_P$ of the point set $P$, then continuously splits $\mathfrak{b}_P$ into a collection $\mathcal{B}$ of cubical boxes until each box in $\mathcal{B}$ contains only one point. The whole produce of the method takes $O(dn\log{n})$ time where $n=|P|$. The method proceeds in a series of split steps. In each split step, the box $\mathfrak{b}_L$ with the largest side length in the current collection $\mathcal{B}$ is chosen and split. Define $h_i(\mathfrak{b})$ to be the hyperplane orthogonal to the i-th coordinate axis and passing through the center of $\mathfrak{b}$. One split step will split $\mathfrak{b}_L$ into at most $2^d$ sub-boxes using all $h_i(\mathfrak{b}_L)$, each of which is an MCB. The set of non-empty sub-boxes generated by conducting one split step on $\mathfrak{b}$ is denoted as $Succ(\mathfrak{b})$.
The details of the box splitting method can be found in \cite{Vaidya1986}. 

Next we describe how to use this method to build the box split tree $T$. The main obstacle is to satisfy the fifth term in Definition \ref{def:split-tree}, i.e., $rmax_\mathfrak{b}< \frac{2}{2+\epsilon}r_\mathfrak{b}$ for each box $\mathfrak{b}$ in $T$. We use the following techniques to solve this problem. 

When a split step is executed and a box $\mathfrak{b}$ is split, we temporarily store the sub-boxes of $\mathfrak{b}$ in a max-heap $H_\mathfrak{b}$, which is ordered on the side length of the boxes in the heap.
Recall the definitions in Section \ref{subsec:def-lema}, the side length of a box $\mathfrak{b}$ is denoted as $len(\mathfrak{b})$.
When box $\mathfrak{b}$ is split fine enough so that $rmax_\mathfrak{b}< \frac{2}{2+\epsilon}r_\mathfrak{b}$ is satisfied, the algorithm will create a node for each $\mathfrak{b}'\in H_\mathfrak{b}$, and hang it under the node for box $\mathfrak{b}$ in the box split tree $T$. Then for each $\mathfrak{b}'$ at these newly created leaf nodes, a max-heap is created to store its sub-boxes. 
In an overview, a max-heap is maintained for each box at the leaf nodes of the box split tree.

In each split step, the box with the largest volume is split. To efficiently pick out this box, a secondary heap $H_2$ is maintained. The heaps for the leaf nodes are called the primary heaps in contrast. The elements in $H_2$ is just the top elements in each primary heap, together with a pointer to its corresponding primary heap. Apparently the top element $\mathfrak{b}_{top}$ in $H_2$ is the box with largest volume. When $\mathfrak{b}_{top}$ is picked, the primary and secondary heap will pop it out simultaneously. 
Then $\mathfrak{b}_{top}$ is split by conducting one split step on it, generating $Succ(\mathfrak{b})$. These sub-boxes in $Succ(\mathfrak{b})$ will be added into the primary heap where $\mathfrak{b}_{top}$ formerly resides. 
When this primary heap finishes maintaining, its top element is inserted into the secondary heap. And then the iteration continues. 

We point out the last problem to solve in order to satisfy the fifth term in Definition \ref{def:split-tree}. The heaps, including the primary heaps and the secondary heap, are organized according to the $len$ value of the boxes, in order to retrieve the box with the largest volume. On the other hand, the condition of $rmax_\mathfrak{b}<\frac{2}{2+\epsilon}r_\mathfrak{b}$ is based on the $Est$ value of the boxes, because here we have $rmax_\mathfrak{b}=\max\limits_{b'\in H_\mathfrak{b}}{\{Est(\mathfrak{b}')\}}$. 
Notice that the top element $\mathfrak{b}_{top}$ in the primary heap have the largest $len$ value, but may not has the largest $Est$ value. 
So we can not directly check $r_{\mathfrak{b}_{top}}< \frac{2}{2+\epsilon}r_\mathfrak{b}$ to decide whether $\mathfrak{b}$ is split fine enough. 
Fortunately, the $len$ and $Est$ value of a box have certain quantity relationships, which is formalized into the following lemma. 

\begin{lemma}\label{lema:len-est}
	For the MCB $\mathfrak{b}$ of any point set $P$ where $|\mathfrak{b}|\ge 2$, we have $len(\mathfrak{b})\le Est(\mathfrak{b})\le d\cdot len(\mathfrak{b})$. In the situation that $|\mathfrak{b}|=1$, we redefine $len(\mathfrak{b})$ as $len(\mathfrak{b})=Est(\mathfrak{b})$ to make this inequality consistent.
\end{lemma}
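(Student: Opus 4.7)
The plan is to prove the two inequalities $len(\mathfrak{b}) \le Est(\mathfrak{b})$ and $Est(\mathfrak{b}) \le d \cdot len(\mathfrak{b})$ separately, working directly from the definitions. Since $|\mathfrak{b} \cap P| \ge 2$, equation~(\ref{eqtn:est(b)}) reduces $Est(\mathfrak{b})$ to $D_{max}(\mathfrak{b}) = \max_{p_1, p_2 \in \mathfrak{b}} D(p_1, p_2)$, so the entire task is to upper- and lower-bound this maximum pairwise $L_q$ distance in terms of the side length of the minimal cubical box.

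For the lower bound, I would first unpack what ``minimal cubical box'' really buys us: since any cubical box of smaller side length would fail to contain $P \cap \mathfrak{b}$, the side length must equal the maximum, over all coordinate axes, of the spread $\max_{p \in P \cap \mathfrak{b}} p_i - \min_{p \in P \cap \mathfrak{b}} p_i$. Fixing an axis $i^*$ attaining this maximum, there exist points $p_1, p_2 \in P \cap \mathfrak{b}$ with $|p_{1,i^*} - p_{2,i^*}| = len(\mathfrak{b})$. The $L_q$ distance between them satisfies $D(p_1, p_2)^q \ge |p_{1,i^*} - p_{2,i^*}|^q$, so $D(p_1, p_2) \ge len(\mathfrak{b})$, and this witness immediately gives $Est(\mathfrak{b}) = D_{max}(\mathfrak{b}) \ge len(\mathfrak{b})$.

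For the upper bound, I would just bound coordinatewise. Any two points $p_1, p_2$ inside the cubical box $\mathfrak{b}$ satisfy $|p_{1,i} - p_{2,i}| \le len(\mathfrak{b})$ for every axis $i$, so $D(p_1, p_2)^q \le d \cdot len(\mathfrak{b})^q$ and hence $D(p_1, p_2) \le d^{1/q} \cdot len(\mathfrak{b}) \le d \cdot len(\mathfrak{b})$ for any $q \ge 1$. Taking the max over all pairs in $\mathfrak{b}$ yields $Est(\mathfrak{b}) \le d \cdot len(\mathfrak{b})$.

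There is no real obstacle here; the only step that deserves a line of care is the claim that minimality of the cubical box forces the side length to coincide with some coordinate spread, but this is immediate from the definition of MCB. The upper bound $d \cdot len(\mathfrak{b})$ is deliberately loose, since the tight bound would be $d^{1/q} \cdot len(\mathfrak{b})$, but the stated form suffices for the preprocessing analysis that follows. The degenerate case $|\mathfrak{b} \cap P| = 1$ is handled by the convention $len(\mathfrak{b}) = Est(\mathfrak{b})$ stipulated in the lemma, under which both inequalities become trivial equalities.
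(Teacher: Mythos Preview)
Your proposal is correct and follows essentially the same approach as the paper: both arguments reduce to the observations that (i) minimality of the cubical box forces some coordinate spread to equal $len(\mathfrak{b})$, giving the lower bound, and (ii) the $L_q$ distance is at most $d$ times the $L_\infty$ distance, giving the upper bound. Your version is simply more explicit---you exhibit the witness pair for the lower bound and note the sharper $d^{1/q}$ constant---whereas the paper phrases the lower bound as a one-line contradiction (``otherwise $\mathfrak{b}$ could be shrunk'').
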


\begin{proof}
	The lemma already fixes the situation of $|\mathfrak{b}|=1$, and thus the proof focuses on when $|\mathfrak{b}|\ge 2$.
	
	If $len(\mathfrak{b})>Est(\mathfrak{b})$, then $\mathfrak{b}$ can be shrunk and still contain all points in $\mathfrak{b}$, which conflicts with that $\mathfrak{b}$ is the Minimal Cubical Box (MCB) of $P$. Thus $len(\mathfrak{b})\le Est(\mathfrak{b})$.
	
	Recalling Equation \ref{eqtn:est(b)}, when $|\mathfrak{b}|\ge 2$, $Est(\mathfrak{b})$ is defined to be $D_{max}(\mathfrak{b})$. Note that the $L_q$ distance between two points in $d$-dimensional space is bounded by $d$ times of the $L_\infty$ distance between them. Since the points are enclosed by box $\mathfrak{b}$ whose side length is $len(\mathfrak{b})$, we conclude that $Est(\mathfrak{b})\le d\cdot len(\mathfrak{b})$.
	
	Both sides of the inequality is proved.
	\qed
	
\end{proof}

With the help of Lemma \ref{lema:len-est}, we have the following Lemma \ref{lema:split-fine-enough} about the criteria for deciding whether a box is split fine enough.

\begin{lemma}\label{lema:split-fine-enough}
	For box $\mathfrak{b}$ and its primary heap $H_\mathfrak{b}$, if the top element $\mathfrak{b}_{top}$ satisfies $len(\mathfrak{b}_{top})< \frac{2}{(2+\epsilon)d}len(\mathfrak{b})$, then  $rmax_\mathfrak{b}<\frac{2}{2+\epsilon} r_\mathfrak{b}$. 
\end{lemma}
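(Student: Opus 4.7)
The plan is to chain together the two inequalities in Lemma \ref{lema:len-est} applied at two different scales: once to bound the $Est$ values of the sub-boxes from above using the side length of the largest sub-box, and once to bound the side length of $\mathfrak{b}$ from above using $r_\mathfrak{b} = Est(\mathfrak{b})$. The hypothesis on $len(\mathfrak{b}_{top})$ glues these two together with exactly the right constant.

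First I would unpack $rmax_\mathfrak{b}$ using the identity $rmax_\mathfrak{b} = \max_{\mathfrak{b}' \in H_\mathfrak{b}} \{Est(\mathfrak{b}')\}$ noted right before the lemma statement. Then, for every $\mathfrak{b}' \in H_\mathfrak{b}$, I would apply the upper bound half of Lemma \ref{lema:len-est} to obtain $Est(\mathfrak{b}') \le d \cdot len(\mathfrak{b}')$. Since $H_\mathfrak{b}$ is a max-heap ordered by side length and $\mathfrak{b}_{top}$ is its root, we have $len(\mathfrak{b}') \le len(\mathfrak{b}_{top})$ for every $\mathfrak{b}' \in H_\mathfrak{b}$, giving $Est(\mathfrak{b}') \le d \cdot len(\mathfrak{b}_{top})$, and therefore $rmax_\mathfrak{b} \le d \cdot len(\mathfrak{b}_{top})$.

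Next I would plug in the assumed bound $len(\mathfrak{b}_{top}) < \frac{2}{(2+\epsilon)d} len(\mathfrak{b})$, which cancels the factor of $d$ cleanly and yields $rmax_\mathfrak{b} < \frac{2}{2+\epsilon} len(\mathfrak{b})$. Finally, I would invoke the lower bound half of Lemma \ref{lema:len-est}, namely $len(\mathfrak{b}) \le Est(\mathfrak{b}) = r_\mathfrak{b}$, to conclude $rmax_\mathfrak{b} < \frac{2}{2+\epsilon} r_\mathfrak{b}$, as required.

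Honestly there is no real obstacle here; the result is essentially a two-line calculation once one realizes that $d$ enters twice in opposite directions (through the upper bound relating $Est$ to $len$ for sub-boxes, and through the design of the constant $\frac{2}{(2+\epsilon)d}$ in the hypothesis), and they cancel. The only thing worth being careful about is the degenerate case $|\mathfrak{b}'|=1$ inside $H_\mathfrak{b}$, but Lemma \ref{lema:len-est} has already been restated so that $len(\mathfrak{b}')=Est(\mathfrak{b}')$ in that case, so the bound $Est(\mathfrak{b}') \le d \cdot len(\mathfrak{b}')$ still holds and the argument goes through uniformly.
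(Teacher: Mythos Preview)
Your proposal is correct and follows essentially the same approach as the paper: chain $Est(\mathfrak{b}')\le d\cdot len(\mathfrak{b}')\le d\cdot len(\mathfrak{b}_{top})$ via Lemma~\ref{lema:len-est} and the heap property, substitute the hypothesis to get $\frac{2}{2+\epsilon}len(\mathfrak{b})$, and finish with $len(\mathfrak{b})\le Est(\mathfrak{b})=r_\mathfrak{b}$. Your explicit remark about the degenerate case $|\mathfrak{b}'|=1$ is a small addition, but the paper's redefinition of $len$ in Lemma~\ref{lema:len-est} already covers it, so the arguments coincide.
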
 

\begin{proof}
	It's sufficient to prove $\forall \mathfrak{b}'\in H_\mathfrak{b}, Est(\mathfrak{b}')<\frac{2}{2+\epsilon}Est(\mathfrak{b})$. 
	
	Based on Lemma \ref{lema:len-est}, we have $Est(\mathfrak{b}')\le d\cdot len(\mathfrak{b}')\le d\cdot len(\mathfrak{b}_{top})$ for any box $\mathfrak{b}'\in H_\mathfrak{b}$. Combining with the condition given in the lemma, we have $Est(\mathfrak{b}')< d\frac{2}{(2+\epsilon)d}len(\mathfrak{b})= \frac{2}{2+\epsilon} len(\mathfrak{b})$. And further combining with $len(\mathfrak{b})\le Est(\mathfrak{b})$, we finally have $Est(\mathfrak{b}')<\frac{2}{2+\epsilon}Est(\mathfrak{b})$, which proves the lemma.
	\qed
\end{proof}

The pseudo codes for building the box split tree are given in Algorithm \ref{alg:preprocess}. The algorithm also includes the invocation of Algorithm \ref{alg:nbr-maintain}, aimed to maintain the $Nbr$ sets, which will be introduced in the next section.

\begin{algorithm}[H]
	\caption{Preprocessing}\label{alg:preprocess}
	\KwIn{a point set $P$, and an approximation factor $\epsilon$}
	\KwOut{a box split tree $T$}
	
	\begin{spacing}{1.1}
		\tcp{Initialization}
	\end{spacing}
	Compute $\mathfrak{b}_0=MCB(P)$\;
	Compute the enclosing ball $B(c_{\mathfrak{b}_0},r_{\mathfrak{b}_0})$ of $\mathfrak{b}_0$\;
	Set $\mathfrak{b}_0$ to be the root of $T$\;
	Initialize the primary heap for $\mathfrak{b}_0$ with one key-value pair $(len(\mathfrak{b}_0),\mathfrak{b}_0)$\;
	Initialize the secondary heap $H_2$ with one key-value pair $(len(\mathfrak{b}_0),\mathfrak{b}_0)$\;
	
	\begin{spacing}{1.1}
		\tcp{Main loop}
	\end{spacing}	
	\While{$|\mathcal{B}|<n$}
	{
		Pop out the top element $\mathfrak{b}_{top}$ from $H_2$ and its corresponding primary heap $H_{\mathfrak{b}_\mathrm{s}}$\;
		Split $\mathfrak{b}_{top}$ by conducting one split step on $\mathfrak{b}$, generating $Succ(\mathfrak{b}_{top})$\;
		
		\ForEach{$\mathfrak{b}\in Succ(\mathfrak{b}_{top})$}{
			Add $\mathfrak{b}$ into $H_{\mathfrak{b}_\mathrm{s}}$, and maintain the heap\;
		}
		Let the current top element of $H_{\mathfrak{b}_\mathrm{s}}$ to be $\mathfrak{b}_t$\;
		Let $Flag=false$\;
		\uIf(\tcp*[h]{Applying Lemma \ref{lema:split-fine-enough} }){$len(\mathfrak{b}_t)<\frac{2}{(2+\epsilon)d}len(\mathfrak{b}_\mathrm{s})$}
		{
			Let $Flag=true$\;
			\ForEach{$\mathfrak{b}\in H_{\mathfrak{b}_\mathrm{s}}$}{
				Create a node and hang it under the node of $\mathfrak{b}_\mathrm{s}$\;
				Initialize the primary heap for $\mathfrak{b}$ with one key-value pair $(len(\mathfrak{b}),\mathfrak{b})$\;	
			}
		}
		\Else{
			Add $\mathfrak{b}_t$ into $H_2$.
		}
		Invoke Algorithm \ref{alg:nbr-maintain}, taking $\mathfrak{b}, Succ(\mathfrak{b})$, $rmax_{\mathcal{B}_{\mathfrak{b}_\mathrm{s}}}$, and the boolean value $Flag$ as the input of this invocation\;
	}
\end{algorithm}

\subsubsection{$Nbr$ sets maintaining}
Algorithm \ref{alg:nbr-maintain} for maintaining $Nbr(\mathfrak{b})$ is given below. It is invoked each time the main loop of Algorithm \ref{alg:preprocess} is executed, as shown above.

\begin{algorithm}[H]
	\caption{Maintaining $Nbr(b)$}\label{alg:nbr-maintain}
	\KwIn{box $\mathfrak{b}$, $Succ(\mathfrak{b})$, the neighbor range parameter $rmax_{\mathcal{B}_{\mathfrak{b}_\mathrm{s}}}$, and a boolean value $Flag$. }
	\ForEach{$\mathfrak{b}'\in Succ(\mathfrak{b})$}{
		$Nbr(\mathfrak{b}')\gets Nbr(\mathfrak{b})\cup Succ(\mathfrak{b})-\{\mathfrak{b}\}$\;
		Set $Est(\mathfrak{b}')$ according to Equation \ref{eqtn:est(b)}\;
		Update $rmax_{\mathfrak{b}_s}$\;\label{line:nbr-rmax_bs}
	}
	\ForEach{$\mathfrak{b}'\in Nbr(b)$}{
		\uIf{$Flag=ture$ and $\mathfrak{b}'$ is in a higher level that $\mathfrak{b}$}{
			$Nbr(\mathfrak{b}')\gets  Nbr(\mathfrak{b'})\cup Succ(\mathfrak{b})$
		}\Else{
			$Nbr(\mathfrak{b}')\gets  Nbr(\mathfrak{b'})\cup Succ(\mathfrak{b})-\{\mathfrak{b}\}$
		}
	}
	\ForEach{$\mathfrak{b}'\in Succ(\mathfrak{b})$}{
		\ForEach{$\mathfrak{b}''\in Nbr(b')$}{
			\If{$D(c_{\mathfrak{b}''},c_{\mathfrak{b}'})>(3+4/\epsilon)rmax_{\mathcal{B}_{\mathfrak{b}_\mathrm{s}}}$}{
				Delete $\mathfrak{b}''$ from $Nbr(\mathfrak{b}')$\;
				Delete $\mathfrak{b}'$ from $Nbr(\mathfrak{b}'')$\;
			}
		}
	}
\end{algorithm}

There are two parts of Algorithm \ref{alg:nbr-maintain} that need to be explained in detail.

The first is Line \ref{line:nbr-rmax_bs}. From Definition \ref{def:nbr} for $Nbr(\mathfrak{b})$, we can see that the maintaining of $Nbr(\mathfrak{b})$ is based on the value $rmax_{\mathcal{B}_{\mathfrak{b}_\mathrm{s}}}$ passed down by its super-box $\mathfrak{b}_\mathrm{s}$. In Algorithm \ref{alg:nbr-maintain}, Line \ref{line:nbr-rmax_bs} is aimed for updating $rmax_{\mathfrak{b}_\mathrm{s}}$ when the set of sub-boxes of $\mathfrak{b}_\mathrm{s}$ is changed. If $Nbr(\mathfrak{b})$ is implemented as a heap, then whenever any sub-box of $\mathfrak{b}_\mathrm{s}$ needs $rmax_{\mathcal{B}_{\mathfrak{b}_\mathrm{s}}}$, this value can be retrieved from the heap in constant time.

The other part is the second $foreach$ loop in Algorithm \ref{alg:nbr-maintain}. The functionality of the loop is explained in the following Lemma \ref{lema:nbr-propety}.

\begin{lemma}\label{lema:nbr-propety}
	The second $foreach$ loop ensures that for all box $\mathfrak{b}$ in the box split tree $T$, each $\mathfrak{b}'\in Nbr(\mathfrak{b})$ is either in the same level with $\mathfrak{b}$, or a degenerated box containing only one point.
\end{lemma}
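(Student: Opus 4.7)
The plan is to proceed by induction on the iteration count of the main loop of Algorithm~\ref{alg:preprocess}, carrying the strengthened invariant that for every box $\mathfrak{b}$ currently stored either as a node of the partially built tree $T$ or as an element of some primary heap, every $\mathfrak{b}'\in Nbr(\mathfrak{b})$ is either at the same level as $\mathfrak{b}$ or is a degenerate singleton. To make ``level'' unambiguous for boxes still residing inside a primary heap, I would assign to each $\mathfrak{b}\in H_{\mathfrak{b}_\mathrm{s}}$ the prospective level $\mathrm{level}(\mathfrak{b}_\mathrm{s})+1$, i.e.\ the depth it will acquire once $\mathfrak{b}_\mathrm{s}$ is finalised and its heap contents are hung under it in $T$. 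The base case is immediate since only the root $\mathfrak{b}_0$ exists initially and $Nbr(\mathfrak{b}_0)$ is trivial.

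For the inductive step, fix one execution of Algorithm~\ref{alg:nbr-maintain} on arguments $(\mathfrak{b}, Succ(\mathfrak{b}), rmax_{\mathcal{B}_{\mathfrak{b}_\mathrm{s}}}, Flag)$, where $\mathfrak{b}$ is the freshly split $\mathfrak{b}_{top}$. The first \textbf{foreach} merely initialises $Nbr(\mathfrak{b}')$ for each $\mathfrak{b}'\in Succ(\mathfrak{b})$, inheriting from $Nbr(\mathfrak{b})$; because the new sub-boxes share the conceptual level of $\mathfrak{b}$, the invariant for these new entries follows directly from the induction hypothesis applied to $\mathfrak{b}$. The third \textbf{foreach} only deletes elements from $Nbr$ sets for a distance reason, so it cannot break the level invariant. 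All the real work therefore lies in the second \textbf{foreach}, which is precisely what the lemma claims to justify.

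In that loop I would iterate over each $\mathfrak{b}'\in Nbr(\mathfrak{b})$ and argue by the exact case split hard-coded in the algorithm. If $Flag=\mathit{false}$, the heap $H_{\mathfrak{b}_\mathrm{s}}$ is still being refined and the sub-boxes $Succ(\mathfrak{b})$ simply replace $\mathfrak{b}$ at the same prospective level, so the \textbf{else}-branch's removal of $\mathfrak{b}$ and insertion of $Succ(\mathfrak{b})$ preserves the invariant irrespective of whether $\mathfrak{b}'$ was same-level or a singleton. If $Flag=\mathit{true}$ and $\mathfrak{b}'$ is at the same level as $\mathfrak{b}$, the newly finalised members of $Succ(\mathfrak{b})$ are still at that same level as $\mathfrak{b}'$, and the \textbf{else}-branch works once more. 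The distinctive case is $Flag=\mathit{true}$ with $\mathfrak{b}'$ at a strictly higher (shallower) level than $\mathfrak{b}$: by the induction hypothesis $\mathfrak{b}$ must then already be a singleton, so retaining $\mathfrak{b}$ in $Nbr(\mathfrak{b}')$ is still legal under the singleton clause, and the members of $Succ(\mathfrak{b})$ — which sit strictly deeper than $\mathfrak{b}'$ — also qualify as admissible entries provided they too are singletons; this matches the behaviour of the \textbf{if}-branch.

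The most delicate step I anticipate is justifying, in this last sub-case, that each element of $Succ(\mathfrak{b})$ can legitimately enter $Nbr(\mathfrak{b}')$ without violating the invariant for $\mathfrak{b}'$. I would discharge this by observing that a singleton $\mathfrak{b}$ is either not further split at all (so $Succ(\mathfrak{b})$ is morally $\{\mathfrak{b}\}$) or, if the algorithm invokes a split step on it for bookkeeping reasons, every resulting sub-box still contains only the same single point and is therefore a singleton in its own right. Combined with the distance pruning performed by the third \textbf{foreach} on subsequent iterations, this closes the case analysis, and induction propagates the invariant to the final tree $T$, yielding the lemma. The bookkeeping around ``levels'' of heap-resident boxes is the main source of friction, and writing out a clean formal definition of level at the outset is, I expect, the key to keeping the rest of the proof routine.
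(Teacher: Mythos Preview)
Your induction invariant is too strong and does not survive a $Flag=\text{true}$ step. Finalisation is local to one primary heap $H_{\mathfrak{b}_\mathrm{s}}$: when its contents are hung as children of $\mathfrak{b}_\mathrm{s}$, each such box $\mathfrak{b}$ has its prospective level incremented (it now sits alone in its own new heap $H_\mathfrak{b}$), but boxes in $Nbr(\mathfrak{b})$ that live in a \emph{different} primary heap $H_{\mathfrak{c}_\mathrm{s}}$, with $\mathfrak{c}_\mathrm{s}$ at the same tree-level as $\mathfrak{b}_\mathrm{s}$, keep their old prospective level because $H_{\mathfrak{c}_\mathrm{s}}$ has not yet been finalised. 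Those neighbours need not be singletons, so immediately after this step your invariant already fails for $\mathfrak{b}$. Consequently, when you later reach the case ``$Flag=\text{true}$ with $\mathfrak{b}'$ at a strictly higher level'' and appeal to the hypothesis on $Nbr(\mathfrak{b}')$ to force $\mathfrak{b}$ to be a singleton, the hypothesis is no longer available; and indeed $\mathfrak{b}=\mathfrak{b}_{top}$ is the box of largest side length currently present, so it is typically not a singleton at all.

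The paper's argument proceeds differently and does not try to maintain the level property as a loop invariant. It allows $Nbr(\mathfrak{b})$ to temporarily contain a cross-level neighbour $\mathfrak{b}'$, observes that such a $\mathfrak{b}'$ cannot yet be an inner node, and then shows that a \emph{future} invocation of Algorithm~\ref{alg:nbr-maintain}---the one in which $\mathfrak{b}'$'s own heap is finally advanced to the next level---executes the \textbf{else}-branch and evicts $\mathfrak{b}'$ from $Nbr(\mathfrak{b})$. The only way $\mathfrak{b}'$ escapes this eventual clean-up is to be a singleton that is never split. This ``violate now, repair later'' mechanism is the substance of the lemma and is absent from your plan.
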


\begin{proof}
	From Algorithm \ref{alg:preprocess}, we know that the boolean value $Flag$ indicates whether splitting $\mathfrak{b}$ causes the box split tree $T$ to grow. If $Flag=true$, $\mathfrak{b}$ becomes a inner node.
	In that case, if there is a box $\mathfrak{b}'\in Nbr(\mathfrak{b})$ where $\mathfrak{b}'$ is in a higher level than $\mathfrak{b}$, $\mathfrak{b}$ will remain in $Nbr(\mathfrak{b}')$ according to the second $foreach$ loop in Algorithm \ref{alg:nbr-maintain}. 
	First we claim that $\mathfrak{b}'$ is not an inner node. 
	If so, $\mathfrak{b}'$ must have been split before, and Algorithm \ref{alg:nbr-maintain} was invoked at the time $\mathfrak{b}'$ was split. 
	In this invocation, the $else$-branch of the second $foreach$ loop was executed, and $\mathfrak{b}'$ was already deleted from $Nbr(\mathfrak{b})$. 
	This conflicts with $\mathfrak{b}'\in Nbr(\mathfrak{b})$. 
	Thus, we get the conclusion that $\mathfrak{b}'$ is not an inner node and never has been split. 
	
	We count on the next several invocations of Algorithm \ref{alg:nbr-maintain} to delete $\mathfrak{b}'$ from $Nbr(\mathfrak{b})$.
	After $\mathfrak{b}$ is split, $\mathfrak{b}'$ may be split but the boxes in $Succ(\mathfrak{b}')$ may be in the same level with $\mathfrak{b}'$, which only introduces more higher-level boxes into $Nbr(\mathfrak{b})$. 
	The critical time is when $Succ(\mathfrak{b}')$ is in the next level of $\mathfrak{b}'$. 
	In that case, While Algorithm \ref{alg:nbr-maintain} is invoked by splitting $\mathfrak{b}'$, the $else$-branch of the second $foreach$ loop will delete $\mathfrak{b}'$ from $Nbr(\mathfrak{b})$. 
	Of course, if $\mathfrak{b}'$ contains only one point and can not be split, it will remain in $Nbr(\mathfrak{b})$ until Algorithm \ref{alg:preprocess} terminates. 

	So far we have eliminated a box in $Nbr(\mathfrak{b})$ containing more than one point and in a higher level than $\mathfrak{b}$. Repeatedly applying the same proof, we will eliminate all such box in $\mathfrak{b}$. Then the claim is proved.
\qed
\end{proof}

\subsection{Query}
The query algorithm goes down the tree $T$ returned by Algorithm \ref{alg:preprocess} level by level. At each level of $T$, the algorithm $\mathcal{A}$ for $(c,r)$-NN will be invoked, and the input parameters of $\mathcal{A}$ are set according to Lemma \ref{lema:crnn-param}. The pseudo codes are given in Algorithm \ref{alg:query}.

\begin{algorithm}[h]
	\caption{Query}\label{alg:query}
	\SetKw{Break}{break}
	\SetKw{Continue}{continue}
	\KwIn{query point $q$, data set $P$, box split tree $T$, and algorithm $\mathcal{A}$ for $(c,r)$-NN}
	\KwOut{$\epsilon$-NN of $q$ in $P$}
	
	set $\mathfrak{b}_c=root(T)$\;
	\If{$D(q,c_{\mathfrak{b}_c}\ge T_2(\mathfrak{b}))$}{
		pick any point $p'\in \mathfrak{b}_c\cap P$\;
		\Return $p'$\; \label{line:query:return1}	
	}
	\While{$|\mathfrak{b}_c|>1$}{
		$\mathcal{B}_c\gets Nbr(\mathfrak{b}_c)$\;
		$P_c\gets\bigcup\limits_{\mathfrak{b}\in \mathcal{B}_c}{\mathfrak{b}\cap P}$\;\label{line:query:assign-Pc}
		
		invoke $\mathcal{A}$, where the input of $\mathcal{A}$ is set according to Lemma \ref{lema:crnn-param}\;
		\uIf{the query returns NO}{
			pick any point $p'\in P_c$\;
			\Return $p'$\; \label{line:query:return}
		}
		\Else(\tcp*[h]{the query returns the center $c_{\mathfrak{b}'}$ of box $\mathfrak{b}'$}){
				set $\mathfrak{b}_c=\mathfrak{b}'$\;\label{line:query:reassign-bc}
				\Continue \;	
		}
	}
	$P_c\gets Nbr(\mathfrak{b}_c)\cap P$\; \label{line:query:final-search}
	Conduct brute-force search in $P_c$ to find the exact NN\; \label{line:query:termination}	
\end{algorithm}

We should spend some efforts to explain the termination condition in Algorithm \ref{alg:query}. First we introduce a lemma about $Nbr(\mathfrak{b})$ when $|\mathfrak{b}|=1$.

\begin{lemma}\label{lema:termination}
	For a box $\mathfrak{b}$ satisfying $|\mathfrak{b}|=1$, all the boxes $\mathfrak{b}'\in Nbr(\mathfrak{b})$ contain only one point, i.e., $|\mathfrak{b}'|=1$.
\end{lemma}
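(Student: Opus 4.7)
My approach is to leverage Lemma \ref{lema:nbr-propety} together with the fact that Algorithm \ref{alg:preprocess} halts only when every point sits in its own box. First I would note that the main loop of Algorithm \ref{alg:preprocess} terminates exactly when $|\mathcal{B}| = n$, so in the final tree $T$ every leaf box contains a single point; consequently, the hypothesis $|\mathfrak{b}| = 1$ forces $\mathfrak{b}$ to be a leaf of $T$. Then I would invoke Lemma \ref{lema:nbr-propety} on $\mathfrak{b}$ to split the claim into two cases for each $\mathfrak{b}' \in Nbr(\mathfrak{b})$: either $\mathfrak{b}'$ is degenerate (i.e., already $|\mathfrak{b}'| = 1$), or $\mathfrak{b}'$ sits at the same tree level as $\mathfrak{b}$.

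The first case is immediate. For the second case I would argue by contradiction, supposing $|\mathfrak{b}'| \ge 2$. Because all leaves contain a single point at termination, such a $\mathfrak{b}'$ cannot be a leaf and must therefore have been split at some iteration with $Flag = true$. At that iteration, Algorithm \ref{alg:nbr-maintain} runs its second $foreach$ loop over the neighbors of $\mathfrak{b}'$; since $\mathfrak{b}$ is at the same level as $\mathfrak{b}'$ (hence not in a strictly higher level), the else-branch is the one that fires, deleting $\mathfrak{b}'$ from $Nbr(\mathfrak{b})$. This contradicts $\mathfrak{b}' \in Nbr(\mathfrak{b})$ in the final tree, forcing $|\mathfrak{b}'| = 1$.

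The chief technical subtlety is level bookkeeping: one must verify that at the moment $\mathfrak{b}'$ was split with $Flag = true$, the node $\mathfrak{b}$ already existed in $T$ at the same level as $\mathfrak{b}'$ and sat inside $Nbr(\mathfrak{b}')$, so that the else-branch indeed acts on $\mathfrak{b}$. I would derive this from the recursive construction of $Nbr$ in Algorithm \ref{alg:nbr-maintain}, which ensures that sibling boxes produced in the same splitting event are mutually placed into each other's $Nbr$ sets, combined with the invariant already established by Lemma \ref{lema:nbr-propety} and maintained throughout the preprocessing. Once this level-monotonicity is in hand, the contradiction in the second case closes and no further computation is needed.
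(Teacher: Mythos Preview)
Your argument is correct, but it takes a more elaborate route than the paper's. The paper's proof does not invoke Lemma~\ref{lema:nbr-propety} at all; it simply observes that a degenerate box $\mathfrak{b}$ is never itself split, so Algorithm~\ref{alg:nbr-maintain} continues to rewrite $Nbr(\mathfrak{b})$ (replacing each multi-point neighbor by its successors via the second \texttt{foreach} loop) right up until Algorithm~\ref{alg:preprocess} halts---and at that moment the entire active collection consists of singletons, hence so does $Nbr(\mathfrak{b})$. In other words, the paper argues at the level of the invariant ``$Nbr(\mathfrak{b})$ tracks the current box collection,'' whereas you factor through the level dichotomy of Lemma~\ref{lema:nbr-propety} and then run a contradiction on the same-level case. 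Your route has the advantage of making the level bookkeeping explicit (which the paper's two-sentence proof sweeps under the rug), and it isolates precisely the step---symmetry of $Nbr$ and existence of $\mathfrak{b}$ in $Nbr(\mathfrak{b}')$ at the critical split---where care is actually needed. The paper's route is shorter but relies on the reader accepting that the update mechanism keeps $Nbr(\mathfrak{b})$ inside the current collection, which is essentially the same subtlety you flag.
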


\begin{proof}
	According to Algorithm \ref{alg:nbr-maintain}, if a box $\mathfrak{b}$ satisfies $|\mathfrak{b}|=1$, the algorithm will  keep updating $Nbr(\mathfrak{b})$ until Algorithm \ref{alg:nbr-maintain} is not invoked any more. And that is when Algorithm \ref{alg:preprocess} terminates and when all the boxes degenerate and contain only one point. It implies that any box $\mathfrak{b}'\in Nbr(\mathfrak{b})$ satisfies $|\mathfrak{b}'|=1$.
	\qed
\end{proof}

According to the above lemma, when the WHILE loop breaks, all boxes in $\mathfrak{b}_c\cup Nbr(\mathfrak{b}_c)$ contains only one point. The brute-force search takes $O(|Nbr(\mathfrak{b}_c)|)$ time. We will bound this complexity in the next section.

\section{Analysis}\label{sec:analysis}

\subsection{Correctness}
First we prove the correctness of our query algorithm by introduce the following lemma \ref{lema:nn-in-Pc}.

\begin{lemma}\label{lema:nn-in-Pc}
	In every execution of the loop body, Algorithm \ref{alg:query} ensures that the exact nearest neighbor $p^*\in P_c$ after the assignment of $P_c$(Line \ref{line:query:assign-Pc}).
\end{lemma}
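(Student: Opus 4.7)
The plan is to prove Lemma \ref{lema:nn-in-Pc} by induction on the iteration count of the while loop in Algorithm \ref{alg:query}, with the inductive invariant being exactly the statement: immediately after Line \ref{line:query:assign-Pc}, $p^* \in P_c$.

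For the base case, at the first iteration $\mathfrak{b}_c$ is set to $root(T) = MCB(P)$, which encloses every point of $P$. The recursive Definition \ref{def:nbr} of the Nbr set bottoms out at the root with $Nbr(root) \ni root$, so that $P_c = \bigcup_{\mathfrak{b} \in Nbr(\mathfrak{b}_c)} \mathfrak{b} \cap P \supseteq P$, and $p^* \in P_c$ holds trivially.

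For the inductive step, suppose $p^* \in P_c$ after the $k$-th execution of Line \ref{line:query:assign-Pc}. If the subsequent call to $\mathcal{A}$ returns No, the loop exits at Line \ref{line:query:return} and there is no $(k+1)$-st iteration to argue about. Otherwise $\mathcal{A}$ returns the center $c_{\mathfrak{b}'}$ of some box $\mathfrak{b}'$ drawn from the sub-box collection that was passed in as $Q$. By the $(c,r)$-NN specification together with the parameter choice $c \cdot r = \max_i T_1(\mathfrak{b}_i)$ prescribed by Lemma \ref{lema:crnn-param}, we have $D(q, c_{\mathfrak{b}'}) \le \max_i T_1(\mathfrak{b}_i)$. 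Applying Lemma \ref{lema:nn-in-nbr} to the collection fed into $\mathcal{A}$ then yields $p^* \in Nbr(\mathfrak{b}')$. Since Line \ref{line:query:reassign-bc} sets $\mathfrak{b}_c = \mathfrak{b}'$, the next iteration's $P_c = \bigcup_{\mathfrak{b} \in Nbr(\mathfrak{b}')} \mathfrak{b} \cap P$ still contains $p^*$, closing the induction.

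The hard part will be justifying that Lemma \ref{lema:nn-in-nbr} actually applies, i.e., that the sub-box collection fed into $\mathcal{A}$ in iteration $k$ jointly contains $p^*$. Lemma \ref{lema:nn-in-nbr} presupposes that $p^*$ lies in one of the listed boxes, and the natural choice is the set of children (in $T$) of every box in the current $\mathcal{B}_c$, so that by the inductive hypothesis $p^* \in P_c$ necessarily falls in one of those children. The delicate case is mixed-level Nbr sets, where some boxes in $\mathcal{B}_c$ have already degenerated to singleton leaves and are not split further; here Lemma \ref{lema:nbr-propety} is precisely what guarantees that such singleton leaves persist in every relevant descendant Nbr set, so the point they contribute to $P_c$ never drops out. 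Once this covering property is pinned down, the inductive step reduces to a direct invocation of the $(c,r)$-NN guarantee and Lemma \ref{lema:nn-in-nbr}.
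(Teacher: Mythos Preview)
Your approach is essentially the same as the paper's: induction on the loop iterations, with the inductive step carried by Lemma~\ref{lema:nn-in-nbr} once $\mathcal{A}$ returns a center $c_{\mathfrak{b}'}$ with $D(q,c_{\mathfrak{b}'})\le \max_i T_1(\mathfrak{b}_i)$. The paper's own proof is in fact terser than yours---it simply asserts that Lemma~\ref{lema:nn-in-nbr} applies and concludes $p^*\in Nbr(\mathfrak{b}')$---whereas you explicitly flag the covering issue (that the sub-box collection passed to $\mathcal{A}$ must already contain $p^*$) and point to Lemma~\ref{lema:nbr-propety} for the mixed-level case; this is a point the paper leaves implicit.
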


\begin{proof}
	The proof proceeds by induction. At the beginning of Algorithm \ref{alg:query}, apparently we have $P_c=P$, and $p^*\in P_c$ holds trivially. As inductive hypothesis, we assume $p^*\in P_c$ after Line \ref{line:query:assign-Pc} in one execution of the loop body. Consider the rest of the loop body. If Line \ref{line:query:return} is executed, the algorithm will return, and the induction finishes. 
	If Line \ref{line:query:reassign-bc} is executed, Lemma \ref{lema:nn-in-nbr} ensures that $p^*\in Nbr(\mathfrak{b}')=Nbr(\mathfrak{b}_c)$. 
	Thus, in the next execution of the loop body, the reassignment of $P_c$ at Line \ref{line:query:assign-Pc} makes $p^*\in P_c$ to hold again. Then by mathematical induction, the lemma is proved.
	\qed
\end{proof}

\begin{theorem}[Correctness]\label{thrm:correctness}
	The point $p'$ returned by Algorithm \ref{alg:query} is an $\epsilon$-NN to $q$ in $P$, i.e., if $p^*$ is the exact NN to $q$ in $P$, then $D(q,p')\le (1+\epsilon) D(q,p^*)$.
\end{theorem}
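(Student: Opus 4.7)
The plan is to prove the theorem by case analysis on which of the three return statements in Algorithm \ref{alg:query} terminates the execution, using Lemma \ref{lema:nn-in-Pc} as the overarching invariant. First I would address the early exit at Line \ref{line:query:return1}: here $\mathfrak{b}_c$ is the root $MCB(P)$ and $D(q,c_{\mathfrak{b}_c})\ge T_2(\mathfrak{b}_c)$, so Lemma \ref{lema:t2}(1) directly certifies that every point in $\mathfrak{b}_c\cap P=P$ is an $\epsilon$-NN of $q$, and the returned $p'$ qualifies trivially.

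Next I would handle the return at Line \ref{line:query:return}, triggered by the oracle $\mathcal{A}$ answering No. By construction the parameters passed to $\mathcal{A}$ are $r=\max_i T_2(\mathfrak{b}_i)$ and $c$ chosen so that $c\cdot r=\max_i T_1(\mathfrak{b}_i)$, taken over the boxes $\mathfrak{b}_i\in Nbr(\mathfrak{b}_c)=\mathcal{B}_c$. A No-answer implies $D(q,c_{\mathfrak{b}_i})\ge r\ge T_2(\mathfrak{b}_i)$ for every $\mathfrak{b}_i\in\mathcal{B}_c$, so applying Lemma \ref{lema:t2}(1) box-by-box shows that every point in $\bigcup_i(\mathfrak{b}_i\cap P)=P_c$ is an $\epsilon$-NN, and picking any $p'\in P_c$ is valid. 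For the brute-force termination at Line \ref{line:query:termination}, Lemma \ref{lema:termination} forces every box in $Nbr(\mathfrak{b}_c)$ to contain a single point once $|\mathfrak{b}_c|=1$, so the restricted set $P_c$ is a finite explicit candidate set; Lemma \ref{lema:nn-in-Pc} guarantees $p^*\in P_c$, hence the exhaustive scan at Line \ref{line:query:termination} returns $p^*$ itself, which trivially satisfies $D(q,p^*)\le(1+\epsilon)D(q,p^*)$.

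The main obstacle I anticipate is in the middle case: reconciling the parent-children setup of Lemma \ref{lema:crnn-param}, which is phrased for a single box $\mathfrak{b}$ split into $\mathfrak{b}_1,\ldots,\mathfrak{b}_m$, with the query algorithm's use of the neighbor collection $Nbr(\mathfrak{b}_c)$ that is inherited across descents in the box split tree. To bridge this, I would lean on Lemma \ref{lema:nn-in-Pc} to confirm that $p^*$ stays in $P_c$ from one level to the next, so that the local $\epsilon$-NN guarantee obtained from Lemma \ref{lema:t2} applied at each box of $Nbr(\mathfrak{b}_c)$ upgrades to a global $\epsilon$-NN guarantee against the true $p^*$. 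Once this invariant is in hand, the remainder of the proof is a mechanical combination of Lemmas \ref{lema:t1}, \ref{lema:t2}, \ref{lema:crnn-param}, \ref{lema:termination}, and \ref{lema:nn-in-Pc}.
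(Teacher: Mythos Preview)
Your proposal is correct and follows the same three-way case analysis as the paper's own proof: Lemma~\ref{lema:t2} for the early exit at Line~\ref{line:query:return1}, the content of Lemma~\ref{lema:crnn-param} (which you unpack explicitly rather than merely cite) for the No-answer at Line~\ref{line:query:return}, and Lemma~\ref{lema:nn-in-Pc} for the brute-force termination. Your observation that Lemma~\ref{lema:nn-in-Pc} is also needed in the middle branch---to ensure $p^*\in P_c$ so that the per-box guarantee from Lemma~\ref{lema:t2} over $Nbr(\mathfrak{b}_c)$ becomes a global $\epsilon$-NN guarantee---is a point the paper's terse proof leaves implicit, so you are if anything slightly more careful here.
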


\begin{proof}
	Considering Algorithm \ref{alg:query}, if it returns at \ref{line:query:return1}, Lemma \ref{lema:t2} ensures that the picked point $p'$ is an $\epsilon$-NN to q; if it returns at Line \ref{line:query:return}, Lemma \ref{lema:crnn-param} ensures that the point $p'$ returned here is an $\epsilon$-NN to $q$; and if the algorithm finally goes out of the WHILE loop and executes brute force search in the final $P_c$ assigned at Line \ref{line:query:final-search}, Lemma \ref{lema:nn-in-Pc} ensures that the exact nearest neighbor lies in $P_c$, and thus the brute-force search returns the exact nearest neighbor to $q$ for sure.
	\qed
\end{proof}

\subsection{Complexities}
Before we bound the complexity of our algorithm, we should first bound the size of $Nbr(\mathfrak{b})$ for any box $\mathfrak{b}$ by introducing a lemma from \cite{Vaidya1986}.

\begin{lemma}[\cite{Vaidya1986}]\label{lema:packing}
	Let $r$ be a positive number. During the execution of the split method described in Section \ref{subsec:preprocessing}, at each time before splitting a box, let $\mathcal{B}$ be the current box collection, and let $\mathfrak{b}_L$ be the box with the largest volume in $\mathcal{B}$. For any box $\mathfrak{b}\in \mathcal{B}$, the size of the set 
	$\{\mathfrak{b}'\in\mathcal{B} \mid {D_{min}(\mathfrak{b},\mathfrak{b}')\le r\cdot Est(\mathfrak{b}_L) }\}$ is at most $2^d (2d\lceil r\rceil+3)^d$.
\end{lemma}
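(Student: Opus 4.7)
The plan is a classical Vaidya-style packing argument: first establish a uniform lower bound on the side length of every box currently in $\mathcal{B}$, and then bound, via volume, how many such boxes can sit inside an enclosing region around $\mathfrak{b}$.

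First I would prove a size invariant by induction on the number of split steps: immediately before any split, every $\mathfrak{b}' \in \mathcal{B}$ satisfies $len(\mathfrak{b}') \in \{len(\mathfrak{b}_L), len(\mathfrak{b}_L)/2\}$. The base case is the single root box. In the inductive step, the current largest box (of side $s = len(\mathfrak{b}_L)$) is replaced by $2^d$ sub-boxes of side $s/2$, while all other boxes (of sides $s$ or $s/2$ by hypothesis) are unchanged; the new $\mathfrak{b}_L$ therefore has side either $s$ (if any side-$s$ box survives) or $s/2$, and in either case the invariant is restored. In particular, every box in $\mathcal{B}$ has volume at least $(len(\mathfrak{b}_L)/2)^d$.

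Next I would locate all candidate boxes inside a single enclosing region. Write $l = len(\mathfrak{b}_L)$ and $\rho = r \cdot Est(\mathfrak{b}_L)$. If $\mathfrak{b}' \in \mathcal{B}$ satisfies $D_{min}(\mathfrak{b}, \mathfrak{b}') \le \rho$, then some pair $p \in \mathfrak{b}$, $p' \in \mathfrak{b}'$ has $L_q$ distance at most $\rho$, hence each coordinate differs by at most $\rho$ (using $L_\infty \le L_q$ for $q \ge 1$). Combining this with $len(\mathfrak{b}), len(\mathfrak{b}') \le l$ in a per-axis computation shows that on each axis the projection of $\mathfrak{b}'$ sits inside an interval of length at most $3l + 2\rho$ centered on the corresponding projection of $\mathfrak{b}$; thus every candidate $\mathfrak{b}'$ is contained in one common axis-aligned box $R$ of side at most $3l + 2\rho$. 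By Lemma \ref{lema:len-est}, $Est(\mathfrak{b}_L) \le d\cdot l$, so this side is bounded by $(3 + 2rd)\,l \le (2d\lceil r\rceil + 3)\,l$. Since the boxes in $\mathcal{B}$ have pairwise disjoint interiors (they are produced by orthogonal halvings of the root), a volume count inside $R$ gives at most $((2d\lceil r\rceil + 3)\,l)^d / (l/2)^d = 2^d (2d\lceil r\rceil + 3)^d$ candidates, which is the claimed bound.

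The main obstacle I anticipate is the size invariant of the first step. Vaidya's original argument assumes a clean ``pick the largest, halve it'' loop, whereas this paper's preprocessing algorithm decorates the split loop with primary/secondary heaps and with the $rmax_{\mathfrak{b}} < \tfrac{2}{2+\epsilon} r_{\mathfrak{b}}$ stopping condition before committing split sub-boxes to the tree. I would have to verify that these bookkeeping changes, together with the restriction to nonempty sub-boxes, never leave any box of side less than half the side of the globally largest one; this should still follow by the same induction applied to the union of all heaps, but it is the step that most requires care and is the only part of the proof that is specific to this paper rather than to Vaidya's original statement.
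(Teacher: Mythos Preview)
The paper does not prove this lemma; it quotes it from Vaidya~\cite{Vaidya1986} and uses it as a black box, so there is no in-paper argument to compare against. Your outline is the classical Vaidya packing argument, and the enclosing-region-plus-volume-count step is fine.

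The gap is in your size invariant. You assume that splitting the largest box yields sub-boxes of side exactly $s/2$, and hence that every box in $\mathcal{B}$ has side in $\{l,l/2\}$. But the split method in Section~\ref{subsec:preprocessing} (and in Vaidya's original) shrinks each nonempty half-box to the \emph{MCB} of the points it contains---see the sentence ``One split step will split $\mathfrak{b}_L$ into at most $2^d$ sub-boxes \ldots\ each of which is an MCB.'' A single shrink can make a sub-box arbitrarily small, so after a few rounds $\mathcal{B}$ may well contain boxes of side far below $len(\mathfrak{b}_L)/2$, and your direct volume count over $R$ collapses. The obstacle you flag in your last paragraph (heap bookkeeping, discarding empty sub-boxes) is harmless; the MCB shrink is the real obstruction. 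The standard repair is to associate with each $\mathfrak{b}'\in\mathcal{B}$ the unshrunk half-box $\hat{\mathfrak{b}}'\supseteq\mathfrak{b}'$ from the split step that created it. The same induction shows that the $\hat{\mathfrak{b}}'$ are pairwise interior-disjoint and that $len(\hat{\mathfrak{b}}')\ge len(\mathfrak{b}_L)/2$, since the parent of $\mathfrak{b}'$ was the largest box at the moment it was split and the maximum side length in $\mathcal{B}$ is nonincreasing over time. Each candidate $\mathfrak{b}'$ then contributes a disjoint piece of $\hat{\mathfrak{b}}'$ of volume at least $(l/2)^d$ inside a slight enlargement of your region $R$, and the stated bound follows.
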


Based on the lemma above, we can bound the size of $Nbr(\mathfrak{b})$ for any box $\mathfrak{b}$ in the box split tree $T$ constructed in Algorithm \ref{alg:preprocess}.
\begin{lemma}\label{lema:nbr-size}
	The size of $Nbr(\mathfrak{b})$ defined in Definition \ref{def:nbr} and constructed in Algorithm \ref{alg:nbr-maintain} is $O((\frac{d}{\epsilon})^d)$.
\end{lemma}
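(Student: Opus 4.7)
The plan is to reduce the size bound on $Nbr(\mathfrak{b})$ to a direct application of the Vaidya packing lemma (Lemma \ref{lema:packing}). That lemma bounds, at any moment of the splitting method, the number of boxes $\mathfrak{b}'$ in the current collection $\mathcal{B}$ whose $D_{min}$-distance to a given box is at most $r\cdot Est(\mathfrak{b}_L)$, where $\mathfrak{b}_L$ is the currently largest box, and it gives the bound $2^d(2d\lceil r\rceil+3)^d$. So to land on the claimed $O((d/\epsilon)^d)$, I want to rewrite the condition defining $Nbr(\mathfrak{b})$ in the $D_{min}(\mathfrak{b},\mathfrak{b}')\le r\cdot Est(\mathfrak{b}_L)$ form with $r=O(d/\epsilon)$.

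First I would convert the center-to-center inequality in Definition \ref{def:nbr} into a $D_{min}$ inequality. Since the center $c_{\mathfrak{b}}$ of every enclosing ball is chosen as an actual point of $\mathfrak{b}$, we have $D_{min}(\mathfrak{b},\mathfrak{b}')\le D(c_{\mathfrak{b}},c_{\mathfrak{b}'})$, so
\[ Nbr(\mathfrak{b})\;\subseteq\;\bigl\{\mathfrak{b}'\in\mathcal{B} : D_{min}(\mathfrak{b},\mathfrak{b}')\le (3+4/\epsilon)\,rmax_{\mathcal{B}_{\mathfrak{b}_{\mathrm{s}}}}\bigr\}. \]
Next I would bound $rmax_{\mathcal{B}_{\mathfrak{b}_{\mathrm{s}}}}$ in terms of $Est(\mathfrak{b}_L)$. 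Term 5 of Definition \ref{def:split-tree} forces $rmax_{\mathfrak{b}'}<\tfrac{2}{2+\epsilon}\,r_{\mathfrak{b}'}=\tfrac{2}{2+\epsilon}\,Est(\mathfrak{b}')$ for every $\mathfrak{b}'\in Nbr(\mathfrak{b}_{\mathrm{s}})$. Combining with Lemma \ref{lema:len-est} gives $Est(\mathfrak{b}')\le d\cdot len(\mathfrak{b}')\le d\cdot len(\mathfrak{b}_L)\le d\cdot Est(\mathfrak{b}_L)$, and therefore
\[ (3+4/\epsilon)\,rmax_{\mathcal{B}_{\mathfrak{b}_{\mathrm{s}}}}\;\le\;(3+4/\epsilon)\cdot\frac{2d}{2+\epsilon}\,Est(\mathfrak{b}_L)\;=\;O\!\left(\frac{d}{\epsilon}\right)\cdot Est(\mathfrak{b}_L). \]
Plugging $r=O(d/\epsilon)$ into Lemma \ref{lema:packing} yields $2^d(2d\lceil r\rceil+3)^d=O((d/\epsilon)^d)$ after absorbing polynomial factors in $d$ into the big-$O$ notation.

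The delicate point, and the main obstacle, is matching the ``moment in the splitting process'' between Definition \ref{def:nbr} and Lemma \ref{lema:packing}: the $Nbr$ sets are maintained dynamically by Algorithm \ref{alg:nbr-maintain}, while Vaidya's bound is stated for the snapshot right before a split. I would argue that applying the lemma at the instant $\mathfrak{b}_{\mathrm{s}}$ is split gives exactly the collection $\mathcal{B}$ against which $Nbr(\mathfrak{b})$ is measured, with $\mathfrak{b}_{\mathrm{s}}$ itself serving as the largest box $\mathfrak{b}_L$, so the inequalities above apply uniformly. The other subtlety is that, by Lemma \ref{lema:nbr-propety}, a neighbor of $\mathfrak{b}$ may be a degenerate singleton from an earlier level rather than a same-level box; however, singletons contribute $rmax=0$ and still reside in $\mathcal{B}$, so they are already counted by Vaidya's bound and do not degrade the estimate. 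Once this bookkeeping is in place the lemma follows.
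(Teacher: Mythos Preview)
Your proposal is correct and follows essentially the same route as the paper: convert the center-to-center condition in Definition~\ref{def:nbr} to a $D_{min}$ condition, bound $rmax_{\mathcal{B}_{\mathfrak{b}_{\mathrm{s}}}}$ by $O(d)\cdot Est(\mathfrak{b}_L)$ via Lemma~\ref{lema:len-est}, and invoke Lemma~\ref{lema:packing} with $r=O(d/\epsilon)$. The paper differs only in minor details: it unfolds $rmax_{\mathcal{B}_{\mathfrak{b}_{\mathrm{s}}}}$ directly to the child-box level (getting $rmax_{\mathcal{B}_{\mathfrak{b}_{\mathrm{s}}}}\le d\cdot Est(\mathfrak{b}_L)$ without the extra $\tfrac{2}{2+\epsilon}$ factor you pick up from term~5 of Definition~\ref{def:split-tree}), and instead of pinning down a single snapshot with $\mathfrak{b}_L=\mathfrak{b}_{\mathrm{s}}$ it simply notes that Lemma~\ref{lema:packing} holds before \emph{every} split, so the bound survives the dynamic maintenance of $Nbr(\mathfrak{b})$ regardless of which intermediate value of $rmax_{\mathcal{B}_{\mathfrak{b}_{\mathrm{s}}}}$ is in play---this sidesteps the bookkeeping you flag as the main obstacle.
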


\begin{proof}
	We prove this by using Lemma \ref{lema:packing}.
	
	By Definition \ref{def:nbr}, $Nbr(\mathfrak{b})= \left\{ \mathfrak{b}'\mid {D(c_{\mathfrak{b}'},c_\mathfrak{b})\le(3+4\epsilon)rmax_{\mathcal{B}_{\mathfrak{b}_\mathrm{s}}}} \right\}$, where $\mathcal{B}_{\mathfrak{b}_{\mathrm{s}}}= Nbr(\mathfrak{b}_\mathrm{s})$ and $\mathfrak{b}_{\mathrm{s}}$ is the super box of $\mathfrak{b}$. On the other hand, Lemma \ref{lema:packing} concerns the set $\left\{\mathfrak{b}'\in \mathcal{B}\mid{D_{min}(\mathfrak{b}', \mathfrak{b})\le r\cdot Est(\mathfrak{b}_L)} \right\}$, where $\mathfrak{b}_L$ is the box with the largest volume in the box collection $\mathcal{B}$. We should fill the gap between 
	$Nbr(\mathfrak{b})$ and 
	$\left\{\mathfrak{b}\in \mathcal{B}'\mid{D_{min}(\mathfrak{b}', \mathfrak{b})\le r\cdot Est(\mathfrak{b}_L)} \right\}$ by considering two relationships: 1) $D(c_{\mathfrak{b}'}, c_\mathfrak{b})$ and $D_{min}(\mathfrak{b}',\mathfrak{b})$, and 2) $rmax_{\mathcal{B}_{\mathfrak{b}_\mathrm{s}}}$ and $Est(\mathfrak{b}_L)$.
	
	$1)$ Since the center of the enclosing ball of box $\mathfrak{b}$ is one arbitrary point inside $\mathfrak{b}$, we have $D_{min}(\mathfrak{b},\mathfrak{b}')\le D(c_\mathfrak{b},c_{\mathfrak{b}'})$. 
	Thus, it can be easily verified that:
	$$\forall K>0, \left\{\mathfrak{b}'\mid{D(c_\mathfrak{b},c_{\mathfrak{b}'})\le K}\right\} \subseteq \left\{\mathfrak{b}'\mid {D_{min(\mathfrak{b},\mathfrak{b}')} \le K} \right\}$$ .
	
	$2)$ Recall Definition \ref{def:rmax-collection}, $rmax_{\mathcal{B}_{\mathfrak{b}_\mathrm{s}}}=\max\limits_{\mathfrak{b}'\in \mathcal{B}_{\mathfrak{b}_\mathrm{s}}}{rmax_{\mathfrak{b}'}}=\max\limits_{\mathfrak{b}'\in \mathcal{B}_{\mathfrak{b}_\mathrm{s}}}{ \max\limits_{\mathfrak{b}''\in Chd(\mathfrak{b}')}{Est(\mathfrak{b}'')}}$ where $Chd(\mathfrak{b}')$ is the set of sub-boxes of $\mathfrak{b}'$. Let $\mathcal{B}'=\bigcup_{\mathfrak{b}'\in \mathcal{B}_{\mathfrak{b}_\mathrm{s}}}{Chd(\mathfrak{b}')}$, and $\mathcal{B}'$ is clearly a subset of the whole box collection $\mathcal{B}$.
	On the other hand, $\mathfrak{b}_L$ is the box with the largest volume in $\mathcal{B}$.
	Based on Fact \ref{lema:len-est}, we have $rmax_{\mathcal{B}_{\mathfrak{b}_\mathrm{s}}}=\max\limits_{b'\in \mathcal{B}'}\{Est(\mathfrak{b}')\}\le \max\limits_{b'\in \mathcal{B}'}\{d\cdot len(\mathfrak{b}')\}\le d\cdot len(\mathfrak{b}_L)\le d\cdot Est(\mathfrak{b}_L)$. 
	Thus we have 
	$$\forall \alpha>0, \left\{ \mathfrak{b}'\mid {D_{min}(\mathfrak{b},\mathfrak{b}')} \le \alpha\cdot rmax_{\mathcal{B}_{\mathfrak{b}_\mathrm{s}}} \right\} \subseteq \left\{ \mathfrak{b}'\mid {D_{min}(\mathfrak{b},\mathfrak{b}')} \le \alpha\cdot d\cdot Est(\mathfrak{b}_L)\right\}$$ 
	
	Combining $1)$ and $2)$, we have:
	$$\left\{\mathfrak{b}'\mid{D(c_\mathfrak{b},c_{\mathfrak{b}'}) \le (3+4/\epsilon)rmax_{\mathcal{B}_{\mathfrak{b}_\mathrm{s}}}}\right\} \subseteq \left\{ \mathfrak{b}'\mid {D_{min}(\mathfrak{b},\mathfrak{b}')} \le (3+4/\epsilon)d\cdot Est(\mathfrak{b}_L)\right\}$$
	
	Then according to Lemma \ref{lema:packing}, we have $$\left|\left\{ b'\mid {D_{min}(\mathfrak{b},\mathfrak{b}')} \le (3+4/\epsilon)d\cdot Est(\mathfrak{b}_L)\right\}\right|\le 2^d\left(2d\lceil (3+4/\epsilon)d \rceil+3\right)^d = O((\frac{d}{\epsilon})^d)$$

	Note that $Nbr(\mathfrak{b}_\mathrm{s})$ may be updated and the value of $rmax_{\mathcal{B}_{\mathfrak{b}_\mathrm{s}}}$ may be changed while $Nbr(\mathfrak{b})$ is being maintained based on an older value of $rmax_{\mathcal{B}_{\mathfrak{b}_\mathrm{s}}}$.
	But it does not influent the result in this lemma, because Lemma \ref{lema:packing} ensures that the size of the set considered in the lemma is bounded \emph{every time} before a box is split. Thus, even though $Nbr(\mathfrak{b})$ may be maintained based on an older value of $rmax_{\mathcal{B}_{\mathfrak{b}_\mathrm{s}}}$, $|Nbr(\mathfrak{b})|=O((\frac{d}{\epsilon})^d)$ still holds.
	\qed
	
\end{proof}

We introduce and prove another lemma which is about the property of the box split tree $T$ constructed in preprocessing phase.

\begin{lemma}\label{lema:split-tree}
	For a point set $P$ where $|P|=n$, the fully built split tree $T$ constructed based on $P$ has the following properties:
	\begin{enumerate}
		\item\label{lema:split-tree:number}
		There are at most $2n$ nodes in $T$.
		\item\label{lema:split-tree:build-time}
		The total time to build $T$ is $O(dn\log{n})$.
	\end{enumerate}
\end{lemma}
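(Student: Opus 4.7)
The plan is to address the two parts of the lemma separately, with part 1 providing the structural bound needed to calibrate the work in part 2.

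For part \ref{lema:split-tree:number}, I would first observe that since $T$ is fully built, each leaf corresponds to an MCB containing exactly one point of $P$; since the leaves therefore partition $P$ into singletons, $T$ has exactly $n$ leaves. By item 4 of Definition \ref{def:split-tree}, every internal node of $T$ has at least two children. I would then invoke the standard counting fact that in any rooted tree where every internal node has at least two children, the number of internal nodes is at most the number of leaves minus one (this follows from the handshaking identity: summing children counts over internal nodes equals the total number of non-root nodes, and each internal contributes $\ge 2$). This yields at most $n-1$ internal nodes, so $|T|\le 2n-1 < 2n$.

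For part \ref{lema:split-tree:build-time}, the plan is to bound the total work in the main loop of Algorithm \ref{alg:preprocess} by combining part \ref{lema:split-tree:number} with Vaidya's original analysis. First, by part \ref{lema:split-tree:number} there are $O(n)$ tree nodes, and every non-empty sub-box produced by a split step eventually becomes a node of $T$, so the main loop executes $O(n)$ times. Next, I would cite \cite{Vaidya1986}: the classical analysis of the box-split method shows that, summed across the entire execution, all split steps (computing $Succ(\mathfrak{b}_{top})$ and identifying non-empty sub-boxes) cost $O(dn\log n)$ in total. It then remains to account for the bookkeeping specific to our variant: the initial computation of $MCB(P)$ and its enclosing ball in $O(dn)$ time; the maintenance of each primary heap and of the secondary heap $H_2$; the split-fine-enough test of Lemma \ref{lema:split-fine-enough}; and the creation of tree nodes when a box is declared finished.

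The main potential obstacle is showing that this additional bookkeeping does not inflate the bound beyond $O(dn\log n)$. The resolution is quantitative: every heap insertion, deletion, or find-max costs $O(\log n)$; by part \ref{lema:split-tree:number}, the total number of boxes ever inserted into any heap is $O(n)$, and each contributes at most a constant number of heap operations together with an $O(d)$-time condition check. The total overhead is therefore $O(dn\log n)$, which is absorbed by the Vaidya bound, giving $O(dn\log n)$ overall.
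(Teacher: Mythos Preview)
Your proposal is correct and follows essentially the same approach as the paper: for part~\ref{lema:split-tree:number} you use the identical observations (exactly $n$ leaves, every internal node has $\ge 2$ children) and the same counting conclusion; for part~\ref{lema:split-tree:build-time} you make the same decomposition into Vaidya's $O(dn\log n)$ bound for the split steps plus an $O(n\log n)$ accounting for the heap operations via the $O(n)$ bound on the number of boxes. One minor imprecision (shared with the paper's own proof) is the claim that every non-empty sub-box produced by a split step eventually becomes a tree node---intermediate boxes that are popped and re-split before their level is finalized do not---but the conclusion that the main loop runs $O(n)$ times still follows from Vaidya's analysis, so this does not affect correctness.
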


\begin{proof}
	For the first statement, the proof starts from the following two observations.
	
	1) There are exactly $n$ leaf nodes in $T$. Because $T$ is fully built, each box at leaf node contains only one point. 
	
	2) Each node has at least $2$ child nodes and at most $|P|=n$ child nodes. This comes from the definition of box split tree.

	Combining the two observations, if $T$ is a full binary tree, then $T$ has at most $2n$ nodes, which can be easily verified. As long as one node has three child nodes or more, the total number of nodes would be less than $2n$. The extreme situation is that the root has $n$ child nodes and there are totally $n+1$ nodes in $T$. So we can conclude that there are at most $2n$ nodes in $T$.
	
	For the second statement, we can divide the time to build $T$ into two parts. One is the total time to conduct all the split steps. The other is the total time to manipulate the primary and secondary heaps. We analyze the time complexity of the two parts as follows.
	
	$1)$ The total time to conduct the split steps is $O(dn\log{n})$. This is already proved in \cite{Vaidya1986}. We omit the proof and refer the readers to \cite{Vaidya1986} for the details.
	
	$2)$ We have proved that there are at most $2n$ boxes in the fully built split tree $T$. Considering the manipulation of the primary and secondary heaps, it is easily verified that each box $\mathfrak{b}$ may exist in at most two heaps, i.e., one primary and one secondary. For each heap, $\mathfrak{b}$ can only be pushed into it only once, and be popped out of it only once. The number of the boxes in the heap is at most $2n$, so for each box $\mathfrak{b}$, the heap manipulation time  incurred by $\mathfrak{b}$ is $O(\log{n})$. Thus, the total time to manipulate the primary and secondary heaps is $O(n\log{n})$.
	
	Adding the two parts of complexity, we conclude that the total time to build $T$ is $O(dn\log{n})$.
	\qed
\end{proof}

Now we start to prove the complexities of our algorithm, including preprocessing time, space and query time complexities.

\begin{theorem}[Preprocessing Time Complexity]\label{thrm:preprocessing-complexity}
	The complexity of Algorithm \ref{alg:preprocess} for preprocessing is $O(O((\frac{d}{\epsilon})^d\cdot n\log{n}))$.
\end{theorem}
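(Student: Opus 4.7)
The plan is to decompose the running time of Algorithm \ref{alg:preprocess} into two disjoint parts: (i) everything done in the main loop except the final call to Algorithm \ref{alg:nbr-maintain}, namely the actual Vaidya split steps together with the primary and secondary heap bookkeeping; and (ii) the cumulative cost of all invocations of Algorithm \ref{alg:nbr-maintain}. Part (i) is already settled by Lemma \ref{lema:split-tree}, which yields an $O(dn\log n)$ bound, so the heart of the argument is to bound part (ii).

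For part (ii) I would first count the number of invocations. Since Algorithm \ref{alg:nbr-maintain} is called exactly once per iteration of the main while-loop, and each iteration performs exactly one split step, the number of invocations equals the number of split steps. By Lemma \ref{lema:split-tree} the fully built tree has at most $2n$ nodes, and the Vaidya split method creates $O(n)$ boxes overall, so there are $O(n)$ invocations of Algorithm \ref{alg:nbr-maintain} in total.

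Next I would bound the per-invocation cost by examining the three foreach loops of Algorithm \ref{alg:nbr-maintain} in turn. Each loop ranges over either $Succ(\mathfrak{b})$, whose cardinality is at most $2^d$, or over some $Nbr$ set, whose cardinality is $O((d/\epsilon)^d)$ by Lemma \ref{lema:nbr-size}. The first loop takes $O(2^d \cdot (d/\epsilon)^d)$ time because each of its $O(2^d)$ iterations builds a set union of an $O((d/\epsilon)^d)$-size set with an $O(2^d)$-size set and then computes $Est$ in $O((d/\epsilon)^d)$ time via Equation \ref{eqtn:est(b)}; the second loop is $O((d/\epsilon)^d \cdot 2^d)$; the third is a nested double loop of total size $O(2^d \cdot (d/\epsilon)^d)$ with constant-time body. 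Absorbing the $2^d$ factor into the dominating $(d/\epsilon)^d$ term, the per-invocation cost is $O((d/\epsilon)^d)$.

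Multiplying the $O(n)$ invocations by the $O((d/\epsilon)^d)$ per-invocation cost yields $O((d/\epsilon)^d \cdot n)$ for part (ii); adding the $O(dn\log n)$ of part (i) produces the claimed $O((d/\epsilon)^d \cdot n\log n)$ bound. The one subtlety I anticipate is that Lemma \ref{lema:nbr-size} must remain applicable throughout Algorithm \ref{alg:nbr-maintain}, even though $rmax_{\mathcal{B}_{\mathfrak{b}_\mathrm{s}}}$ may be midway through being updated while a particular $Nbr$ set is being maintained; this is exactly the issue addressed by the closing remark of the proof of Lemma \ref{lema:nbr-size} via Lemma \ref{lema:packing}, so that concern dissolves on reference and is not a new obstacle.
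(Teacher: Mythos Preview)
Your proposal is correct and uses the same two-part decomposition as the paper: part (i) the tree construction via Lemma~\ref{lema:split-tree}, and part (ii) the cumulative cost of the $Nbr$-maintenance calls. Where you differ from the paper is in how you handle part (ii). The paper simply asserts that, with $Nbr$ sets implemented as heaps supporting $O(\log n)$ insertions and deletions, ``the similar analysis in \cite{Vaidya1986}'' yields an $O((d/\epsilon)^d\cdot n\log n)$ bound for all $Nbr$/$Est$ maintenance, and leaves it at that. You instead give a self-contained argument: $O(n)$ invocations (one per split step) times $O((d/\epsilon)^d)$ work per invocation, for a total of $O((d/\epsilon)^d\cdot n)$ on part (ii), which is actually tighter than the paper's intermediate bound; the final $\log n$ in the theorem then comes entirely from part (i).

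Two small remarks on your write-up. First, your justification for ``$O(n)$ invocations'' via Lemma~\ref{lema:split-tree} is slightly roundabout; the clean reason is that every box being split is an MCB of $\ge 2$ points, so $|Succ(\mathfrak{b})|\ge 2$ and each iteration strictly increases the box count, giving at most $n-1$ iterations. Second, your ``constant-time body'' for the third loop is correct \emph{as a function of $n$}, since deletions happen in heaps of size $O((d/\epsilon)^d)$ and therefore cost $O(d\log(d/\epsilon))$ rather than $O(\log n)$; this is absorbed into the $(d/\epsilon)^d$ factor, but it is worth saying explicitly, since the paper itself assumes $O(\log n)$ per heap operation and that is precisely where its extra $\log n$ in part (ii) comes from.
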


\begin{proof}
	The complexity of Algorithm \ref{alg:preprocess} can be divided into two parts, namely, (1) the total time to build the box split tree $T$, and (2) the total time to maintain $Nbr$ data structures and $Est$ value for all boxes. The first part of complexity is already proved in Lemma \ref{lema:split-tree}. Here we prove the second part.
	
	Our Algorithm \ref{alg:nbr-maintain} is very similar to an algorithm for maintaining the $Nbr$ set in \cite{Vaidya1986}.  We prove the complexity of Algorithm \ref{alg:preprocess} by similar techniques in \cite{Vaidya1986}.
	If the $Nbr(\mathfrak{b})$ sets is implemented by a heap, which allows insertion and deletion in $O(\log{n})$ time, and allow access to largest value of $D_{min}(\mathfrak{b}, \mathfrak{b}'), \mathfrak{b'}\in Nbr(\mathfrak{b})$ in $O(\log{n})$ time, then we can use the similar analysis in \cite{Vaidya1986} and bound the time to maintain $Nbr$ sets and $Est$ values by $O((\frac{d}{\epsilon})^d\cdot n\log{n})$. 
	The details are omitted.
	
	In summary, we have proved the desired preprocessing complexity.
	\qed
\end{proof}

\begin{theorem}[Space Complexity]\label{thrm:space-complexity}
	The space complexity of Algorithm \ref{alg:preprocess} is $O((\frac{d}{\epsilon})^d\cdot n)$.
\end{theorem}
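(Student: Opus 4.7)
The plan is to break down the space usage of Algorithm \ref{alg:preprocess} into its constituent data structures and bound each one separately, then observe that the $Nbr$ sets dominate.

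First, I would invoke Lemma \ref{lema:split-tree} (statement \ref{lema:split-tree:number}) to conclude that the fully built box split tree $T$ has at most $2n$ nodes, so storing the tree skeleton (parent/child pointers, and one MCB per node, which takes $O(d)$ coordinates each) uses only $O(dn)$ space. Next, I would account for the primary and secondary heaps used during construction. Each box is inserted into at most two such heaps (one primary heap of its parent and at most one entry in the secondary heap), so with at most $2n$ boxes in total the heap storage is $O(n)$.

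The dominant contribution comes from the $Nbr$ sets. By Lemma \ref{lema:nbr-size}, for every single box $\mathfrak{b}$ in $T$ we have $|Nbr(\mathfrak{b})| = O((d/\epsilon)^d)$. Since there are at most $2n$ boxes in $T$, the total size of all $Nbr$ sets, whether stored as lists or as the heaps described in the proof of Theorem \ref{thrm:preprocessing-complexity}, is bounded by
\[
2n \cdot O\!\left(\left(\tfrac{d}{\epsilon}\right)^d\right) = O\!\left(\left(\tfrac{d}{\epsilon}\right)^d \cdot n\right).
\]
The auxiliary per-box data ($Est(\mathfrak{b})$, $r_\mathfrak{b}$, a representative $c_\mathfrak{b}$, and a pointer to the primary heap) adds only $O(d)$ per node, again subsumed by the bound above.

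Summing the $O(dn)$ for the tree skeleton, the $O(n)$ for the heaps, and the $O((d/\epsilon)^d \cdot n)$ for the $Nbr$ sets yields the claimed total of $O((d/\epsilon)^d \cdot n)$. The only potentially subtle point, which I would note briefly, is that transient $Nbr$-set updates in Algorithm \ref{alg:nbr-maintain} never cause the current total neighbor-storage to exceed the final bound, because Lemma \ref{lema:packing} bounds the neighbor count at \emph{every} intermediate moment of the split process, not just at termination; this is the same observation already used in the proof of Lemma \ref{lema:nbr-size}. I do not expect any nontrivial obstacle here beyond this bookkeeping remark.
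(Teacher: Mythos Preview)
Your proposal is correct and follows essentially the same approach as the paper: invoke Lemma~\ref{lema:split-tree} to bound the number of boxes by $2n$, invoke Lemma~\ref{lema:nbr-size} to bound each $Nbr$ set by $O((d/\epsilon)^d)$, and multiply. Your version is simply more thorough in accounting for the tree skeleton, heaps, and transient states, all of which the paper's two-line proof silently absorbs into the dominant $Nbr$-set term.
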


\begin{proof}
	The space complexity of the algorithm is bounded by the number of boxes in the tree $T$ multiplying the size of $Nbr(\mathfrak{b})$ sets maintained for each box. According to Lemma \ref{lema:split-tree}, there are at most $2n$ boxes in $T$. And according to Lemma \ref{lema:nbr-size}, $|Nbr(\mathfrak{b})|= O((\frac{d}{\epsilon})^d$. Multiplying the two factors, we get the desired result.
	\qed
\end{proof}

\begin{theorem}[Query Time Complexity]\label{thrm:query-complexity}
	Algorithm \ref{alg:query} invokes $O(\log{n})$ times of the algorithm $\mathcal{A}$ for $(c,r)$-NN problem.
\end{theorem}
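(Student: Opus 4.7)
The plan is to bound the number of invocations of $\mathcal{A}$ by the depth of the box split tree $T$, and then to argue that this depth is $O(\log n)$.

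First, I would inspect Algorithm \ref{alg:query} and observe that the WHILE loop calls $\mathcal{A}$ exactly once per iteration: if the call returns NO the algorithm exits at Line \ref{line:query:return}, otherwise $\mathfrak{b}_c$ is replaced by the returned box $\mathfrak{b}'$ at Line \ref{line:query:reassign-bc}. By the parameter choice of Lemma \ref{lema:crnn-param}, $\mathfrak{b}'$ is (the center of) one of the sub-boxes that were fed as input to $\mathcal{A}$, namely a child in $T$ of some element of $Nbr(\mathfrak{b}_c)$; the reassignment $\mathfrak{b}_c \leftarrow \mathfrak{b}'$ therefore moves one step deeper in $T$. Hence the number of $\mathcal{A}$ invocations is at most $\mathrm{depth}(T)$, plus the single test before the loop and the zero invocations made by the final brute-force step.

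Second, I would prove that $\mathrm{depth}(T) = O(\log n)$. The tools available are Definition \ref{def:split-tree} (each internal node has at least $2$ and at most $|P|$ children, and the enforced ratio $rmax_{\mathfrak{b}} < \tfrac{2}{2+\epsilon}\, r_{\mathfrak{b}}$), the $O(n)$ total node count from Lemma \ref{lema:split-tree}, and the packing bound of Lemma \ref{lema:packing} which limits how many boxes of comparable radius can cluster near a given box. The plan is to trace an arbitrary root-to-leaf path and show that the point count of the current box drops by a constant factor after a constant number of levels: along the path, the radii $r_{\mathfrak{b}}$ shrink geometrically by the factor $\tfrac{2}{2+\epsilon}$ per level, and the packing lemma forces the geometric shell at each scale to split into several non-degenerate children whose point subsets are of comparable size. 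This is the analogue of the balancing argument used for compressed fair-split trees, and it should yield $\mathrm{depth}(T) = O(\log n)$.

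The main obstacle is this last balancing step. Since Vaidya's split is purely geometric rather than a median split, a priori a single child could inherit almost all of the parent's points, and a naive geometric analysis only produces a depth of $O(\log \Delta)$, where $\Delta$ is the aspect ratio of $P$. Turning the ``fine-enough'' compression condition $rmax_{\mathfrak{b}} < \tfrac{2}{2+\epsilon}\, r_{\mathfrak{b}}$, together with the packing bound, into a genuine $O(\log n)$ depth bound that is independent of $\Delta$ is the step that will require the most care; I expect to need the same amortisation ideas that underlie Vaidya's $O(dn\log n)$ construction-time proof, charging levels of $T$ against the $O(n)$ total nodes rather than against individual radius-shrinkage steps. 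Once the depth bound is established, the theorem follows immediately from the first paragraph.
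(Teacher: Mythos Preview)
Your overall structure matches the paper's proof exactly: one call to $\mathcal{A}$ per level of $T$ (the paper cites Lemma~\ref{lema:nbr-propety} to ensure each iteration descends one level), then bound the number of levels. Where you and the paper diverge is on the depth bound.

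The paper's entire argument for $\mathrm{depth}(T)=O(\log n)$ is a single sentence: the fan-out is at least $2$ (Definition~\ref{def:split-tree}) and there are $n$ leaves, ``so that the height of the tree is $O(\log n)$.'' You are right to be suspicious of this step---that inference is simply false in general. A caterpillar tree in which every internal node has one leaf child and one internal child has fan-out $2$, exactly $n$ leaves, and height $n-1$. And such trees do arise from the construction in Algorithm~\ref{alg:preprocess}: take, for instance, $d=1$ and points at $0,\tfrac12,\tfrac34,\tfrac78,\dots$; each split peels off the leftmost point and leaves the rest in a single child whose side length has halved, so the fine-enough test of Lemma~\ref{lema:split-fine-enough} triggers after one split and the tree has depth $\Theta(n)$. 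So the step you flagged as ``the main obstacle'' is not merely delicate---it is the step the paper does not actually prove.

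That said, your proposed remedy will not close the gap either. Charging ``levels of $T$ against the $O(n)$ total nodes'' cannot yield a logarithmic depth bound, since a bare path already has $O(n)$ nodes and $\Theta(n)$ depth; and Vaidya's $O(dn\log n)$ analysis bounds total split \emph{work}, not tree height. The packing lemma (Lemma~\ref{lema:packing}) controls how many boxes sit near a given box, not how points are apportioned among children, so it does not supply the constant-factor point drop you are hoping for. In short: your decomposition is the same as the paper's, your diagnosis of the weak point is more accurate than the paper's own proof, but neither the paper's one-liner nor your sketched amortisation establishes the $O(\log n)$ depth claim without further assumptions on the input.
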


\begin{proof}
	Considering the box split tree returned by Algorithm \ref{alg:preprocess}, its fan-out, i.e., the number of child nodes of a node, is at least 2. This comes from the definition of box split tree. And the number of leaf nodes is $n$, so that the height of the tree is $O(\log{n})$. Further, by Lemma \ref{lema:nbr-propety} we know that all boxes in any box $\mathfrak{b}$ in $T$ are in the same level with $\mathfrak{b}$. Thus, Algorithm \ref{alg:query} invokes at most one $(c,r)$-NN query at each level of the tree. Hence, the number of invoked $(c,r)$-NN queries is $O(\log{n})$.
	\qed
\end{proof}

\section{Conclusion}
In this paper we proposed a new algorithm for reducing $\epsilon$-NN problem to $(c,r)$-NN problem. Compared to the former works for the same reduction problem, our algorithm achieves the lowest query time complexity, which is $O(\log{n})$ times of invocations of the algorithm for $(c,r)$-NN problem. We elaborately designed the input parameters of each of the invocation, and built a dedicated data structure in preprocessing phase to support the query procedure. A box split method proposed in \cite{Vaidya1986} is used as a building block for the algorithm of preprocessing phase. Our paper also raises a problem which is to reduce the exponential complexity on $d$ introduced by the box split method. This is left as our future work.

%
%
%
%

\bibliographystyle{splncs04}
\bibliography{library}
\end{document}